\colorlet{MyBlue}{Navy}
\colorlet{MyGreen}{DarkGreen!85!Black}
\newcommand{\afterhead}{.}		
\newcommand{\para}[1]{\medskip\paragraph{\textbf{#1\afterhead}}}
\crefname{assumption}{Assumption}{Assumptions}
\newcommand{\debug}[1]{#1}		
\theoremstyle{plain}
\newtheorem{corollary}{Corollary}		
\newtheorem{lemma}{Lemma}		
\newtheorem{proposition}{Proposition}		
\newtheorem*{corollary*}{Corollary}		
\theoremstyle{definition}
\newtheorem{assumption}{Assumption}		
\newtheorem{example}{Example}		
\newtheorem*{definition*}{Definition}		
\newtheorem*{assumption*}{Assumptions}		
\newtheorem*{example*}{Example}		
\theoremstyle{remark}
\newtheorem*{remark*}{Remark}		
\newcounter{proofpart}
\numberwithin{example}{section}		
\DeclarePairedDelimiter{\bracks}{[}{]}		
\DeclarePairedDelimiter{\parens}{(}{)}		
\DeclarePairedDelimiter{\abs}{\lvert}{\rvert}		
\DeclarePairedDelimiterX{\setdef}[2]{\{}{\}}{#1:#2}		
\DeclarePairedDelimiterXPP{\exclude}[1]{\mathopen{}\setminus}{\{}{\}}{}{#1}
\newcommand{\cf}{cf.\xspace}		
\newcommand{\eg}{e.g.,\xspace}		
\newcommand{\ie}{i.e.,\xspace}		
\newcommand{\textpar}[1]{\textup(#1\textup)}		
\newcommand{\alt}[1]{\tilde#1}		
\newcommand{\R}{\mathbb{R}}		
\DeclareMathOperator{\bigoh}{\mathcal O}		
\newcommand{\dd}{\:d}		
\newcommand{\vecspace}{\R^{\vdim}}		
\newcommand{\vdim}{\debug n}		
\newcommand{\pertvec}{\debug w}		
\newcommand{\unitvec}{\debug z}		
\DeclarePairedDelimiterX{\braket}[2]{\langle}{\rangle}{#1,#2}		
\newcommand{\cvx}{\mathcal{C}}		
\DeclareMathOperator*{\argmin}{arg\,min}		
\newcommand{\points}{\mathcal{\debug X}}		
\newcommand{\point}{\debug x}		
\newcommand{\pointalt}{\alt\point}		
\newcommand{\dpoint}{\debug y}		
\newcommand{\vbound}{\debug V_{\ast}}		
\newcommand{\smooth}{\debug \beta}		
\newcommand{\sol}[1][\point]{#1_{\ast}}		
\DeclareMathOperator{\reg}{Reg}		
\newcommand{\play}{\debug i}		
\newcommand{\playalt}{\debug j}		
\newcommand{\nPlayers}{\debug N}		
\newcommand{\players}{\mathcal{\nPlayers}}		
\newcommand{\pay}{\debug u}		
\newcommand{\payv}{\debug V}		
\newcommand{\eq}{\sol}		
\newcommand{\game}{\mathcal{\debug G}}		
\DeclareMathOperator{\Eucl}{\Pi}		
\newcommand{\breg}{\debug D}		
\newcommand{\radius}{\debug r}		
\DeclareMathOperator{\ex}{\mathbb{E}}		
\DeclareMathOperator{\prob}{\mathbb{P}}		
\newcommand{\samples}{\debug \Omega}		
\newcommand{\filter}{\mathcal{\debug F}}		
\newcommand{\as}{\textpar{a.s.}\xspace}		
\providecommand\given{}		
\DeclarePairedDelimiterXPP{\exof}[1]{\ex}{[}{]}{}{
\renewcommand\given{\nonscript\,\delimsize\vert\nonscript\,\mathopen{}} #1}
\DeclarePairedDelimiterXPP{\probof}[1]{\prob}{(}{)}{}{
\renewcommand\given{\nonscript\,\delimsize\vert\nonscript\,\mathopen{}} #1}
\newcommand{\est}[1]{\hat #1}		
\newcommand{\bias}{\debug b}		
\newcommand{\sbias}{\debug r}		
\newcommand{\bbound}{\debug R}		
\newcommand{\noise}{\debug U}		
\newcommand{\snoise}{\debug \xi}		
\newcommand{\svar}{\debug \psi}		
\newcommand{\start}{\debug 1}		
\newcommand{\running}{\debug 1,2,\dotsc}		
\newcommand{\run}{\debug t}		
\newcommand{\runalt}{\debug s}		
\newcommand{\nRuns}{\debug T}		
\newcommand{\state}{\debug X}		
\newcommand{\step}{\debug \gamma}		
\DeclareMathOperator{\one}{\mathds{1}}		
\newcommand{\from}{\colon}		
\newcommand{\base}{\debug p}		
\DeclareMathOperator{\diam}{diam}		
\DeclarePairedDelimiterX{\product}[2]{\langle}{\rangle}{#1,#2}		
\DeclarePairedDelimiter{\norm}{\lVert}{\rVert}		
\DeclarePairedDelimiterXPP{\dnorm}[1]{}{\lVert}{\rVert}{}{#1}		
\DeclareMathOperator{\vol}{vol}		
\newcommand{\ball}{\mathbb{\debug B}}		
\newcommand{\sphere}{\mathbb{\debug S}}		
\newcommand{\reward}{\debug R}		
\newcommand{\rewards}{\mathcal{\reward}}		
\newcommand{\info}{\mathcal{\debug R}}		
\newcommand{\pool}{\mathcal{\debug P}}		
\newcommand{\unused}{\mathcal{\debug U}}		
\newcommand{\head}{\debug q}		
\newcommand{\delay}{\debug d}
\newcommand{\Delay}{\debug D}
\newcommand{\mix}{\debug \delta}		
\newcommand{\unitvar}{\debug Z}		
\newcommand{\pertvar}{\debug W}		
\begin{document}


\title
[Gradient-free Online Learning in Games with Delayed Rewards]
{Gradient-free Online Learning in Games with Delayed Rewards}

\author
[A.~Héliou]
{Amélie Héliou$^{\ast}$}
\email{a.heliou@criteo.com}

\author
[P.~Mertikopoulos]
{Panayotis Mertikopoulos$^{\diamond,\ast,\lowercase{c}}$}
\email{panayotis.mertikopoulos@imag.fr}

\author
[Z.~Zhou]
{Zhengyuan Zhou$^{\sharp}$}
\email{put.email@here}

\address{$^{c}$\,Corresponding author.}
\address{$^{\ast}$\,Criteo AI Lab.}
\address{$^{\diamond}$\,Univ. Grenoble Alpes, CNRS, Inria, LIG, 38000, Grenoble, France.}
\address{$^{\sharp}$\,Stern School of Business, NYU, and IBM Research.}

\subjclass[2020]{%
Primary 91A10, 91A68, 68Q32;
secondary 91A20, 91A26, 68T05.}

\keywords{%
Online learning;
delayed feedback;
game theory;
stochastic approximation.
}

\thanks{
%
%
This research was partially supported by the COST Action CA16228 ``European Network for Game Theory'' (GAMENET).
P.~Mertikopoulos is also grateful for financial support by
the French National Research Agency (ANR) under grant no.~ANR\textendash 16\textendash CE33\textendash 0004\textendash 01 (ORACLESS).
Zhengyuan Zhou is grateful for the IBM Goldstine fellowship.}

\newcommand{\acli}[1]{\textit{\acl{#1}}}		
\newcommand{\aclip}[1]{\textit{\aclp{#1}}}		
\newcommand{\acdef}[1]{\textit{\acl{#1}} \textup{(\acs{#1})}\acused{#1}}		
\newcommand{\acdefp}[1]{\emph{\aclp{#1}} \textup(\acsp{#1}\textup)\acused{#1}}	

\newacro{DSC}{diagonal strict concavity}
\newacro{SPSA}{simultaneous perturbation stochastic approximation}
\newacro{MAB}{multi-armed bandit}
\newacro{FTRL}{follow-the-regularized-leader}
\newacro{OGD}{online gradient descent}
\newacro{OMD}{online mirror descent}
\newacro{MD}{mirror descent}
\newacro{GOLD}{gradient-free online learning with delayed feedback}
\newacro{FIFO}{first-in, first-out}

\newacro{LHS}{left-hand side}
\newacro{RHS}{right-hand side}
\newacro{iid}[i.i.d.]{independent and identically distributed}
\newacro{lsc}[l.s.c.]{lower semi-continuous}
\newacro{NE}{Nash equilibrium}
\newacroplural{NE}[NE]{Nash equilibria}

\begin{abstract}
%
%
Motivated by applications to online advertising and recommender systems, we consider a game-theoretic model with delayed rewards and asynchronous, payoff-based feedback.
In contrast to previous work on delayed multi-armed bandits, we focus on multi-player games with continuous action spaces, and we examine the long-run behavior of strategic agents that follow a no-regret learning policy (but are otherwise oblivious to the game being played, the objectives of their opponents, etc.).
To account for the lack of a consistent stream of information (for instance, rewards can arrive out of order, with an a priori unbounded delay, etc.), we introduce a gradient-free learning policy where payoff information is placed in a priority queue as it arrives.
In this general context, we derive new bounds for the agents' regret;
furthermore, under a standard diagonal concavity assumption, we show that the induced sequence of play converges to \ac{NE} with probability $1$, even if the delay between choosing an action and receiving the corresponding reward is unbounded.
\end{abstract}

\maketitle
\acresetall		

\section{Introduction}
\label{sec:introduction}

A major challenge in the application of learning theory to online advertising and recommender systems is that there is often a significant delay between action and reaction:
for instance, a click on an ad can be observed within seconds of the ad being displayed, but the corresponding sale can take hours or days to occur \textendash\ if it occurs at all.
Putting aside all questions of causality and ``what if'' reasoning (\eg the attribution of the sale to a given click), this delay has an adverse effect on all levels of the characterization between marketing actions and a user's decisions.

Similar issues also arise in operations research, online machine learning, and other fields where online decision-making is the norm;
as an example, we mention here the case of traffic allocation and online path planning, signal coveriance optimization in signal processing, etc.
In view of all this, a key question that arises is
\begin{inparaenum}
[\itshape a\upshape)]
\item
to quantify the impact of a delayed reward\,/\,feedback structure on multi-agent learning;
and
\item
to design policies that exploit obsolete information in a way as to minimize said impact.
\end{inparaenum}

\para{Context}

In this paper, we examine the above questions in the general framework of online learning in games with continuous action spaces.
In more detail, we focus on recurrent decision processes that unfold as follows:
At each stage $\run=\running$,
the decision-maker (or player) selects an action $\state_{\run}$ from a set of possible actions $\points$.
This action subsequently triggers a reward $\pay_{\run}(\state_{\run})$ based on some \textpar{a priori unknown} payoff function $\pay_{\run}\from\points\to\R$.
However, in contrast to the standard online optimization setting, this reward is only received by the player $\delay_{\run}$ stages later,
\ie at round $\run + \delay_{\run}$.
As a result, the player may receive no information at round $\run$, or they may receive older, obsolete information from some previous round $\runalt < \run$.

This very broad framework places no assumptions on
the governing dynamics between actions and rewards,
the payoff-generating process,
or
the delays encountered by the player.
As such, the most common performance measure for a realized sequence of actions is the player's \emph{regret}, \ie the difference between the player's cumulative payoff over a given horizon and that of the best fixed action in hindsight.
Thus, in the absence of more refined knowledge about the environments, the most sensible choice would be to deploy a policy which, at the very least, leads to \emph{no regret}.

A specific instance of this ``agnostic'' framework \textendash\ and one that has attracted considerable interest in the literature \textendash\ is when the rewards of a given player are determined by the player's interactions with other players, even though the dynamics of these interactions can be unknown to the decision-making players beforehand.
For instance, when placing a bid for reserving ad space, the ultimate payoff of a bidder will be determined by the bids of all other participating players and the rules of the underlying auction.
The exact details of the auction (\eg its reserve price) may be unknown to the bidders, and the bidders may not know anything about whom they are bidding against, but their rewards are still determined by a fixed mechanism \textendash\ that of an \emph{$\nPlayers$-player game}.

With all this in mind, our paper focuses on the following questions that arise naturally in this context:
\emph{Is there a policy leading to no regret in online optimization problems with delayed, payoff-based feedback?}
And, assuming all players subscribe to such a policy,
\emph{does the induced sequence of play converge to a stable, equilibrium state?}

\para{Our contributions}

Our first contribution is to design a policy for online learning in this setting, which we call \acdef{GOLD}.
The backbone of this policy is the \ac{OGD} algorithm of \citet{Zin03}, but with two important modifications designed to address the challenges of the current setting.
The first modification is the inclusion of a zeroth-order gradient estimator based on the \ac{SPSA} mechanism of \citet{Spa97} and \citet{FKM05}.
By virtue of this stochastic approximation mechanism, the player can estimate \textendash\ albeit in a biased way \textendash\ the gradient of their payoff function by receiving the reward of a nearby, perturbed action.
The second element of \ac{GOLD} is the design of a novel information pooling strategy that records information in a priority queue as they arrive, and subsequently dequeues them following a \ac{FIFO} scheme.
The main challenge that occurs here is that the stream of information received by an agent may be highly unbalanced, \eg consisting of intermittent batches of obsolete information followed by periods of feedback silence.
This suggests that an agent should exercise a certain ``economy of actions'' and refrain from burning through batches of received information too quickly;
the proposed pooling policy achieves precisely this by dequeueing at most one bit of feedback, even if more is available at any given stage.

From a theoretical viewpoint, the principal difficulty that arises is how to fuse these two components and control the errors that accrue over time from the use of obsolete \textendash\ and biased \textendash\ gradient estimates.
This requires a delicate shadowing analysis and a careful tweaking of the method's parameters \textendash\ specifically, its step-size sequence and the query radius of the \ac{SPSA} estimator.
In so doing, our first theoretical result is that \ac{GOLD} guarantees no regret, even if the delays encountered by the agent are unbounded.
Specifically, if the reward of the $\run$-th round is received up to $o(\run^{\alpha})$ rounds later, then the \ac{GOLD} algorithm enjoys a regret bound of the form $\bigoh(\nRuns^{3/4} + \nRuns^{2/3 + \alpha/3})$.
In particular, this means that \ac{GOLD} guarantees no regret even under \emph{unbounded} delays that might grow over time at a sublinear rate.

Our third contribution is to derive the game-theoretic implications of concurrently running \ac{GOLD} in a multi-agent setting.
A priori, the link between no regret and \acl{NE} (as opposed to coarse correlated equilibrium) is quite weak.
Nevertheless, if the game in question satisfies a standard monotonicity condition due to \citet{Ros65}, we show that the sequence of actions generated by the \ac{GOLD} policy converges to \acl{NE} with probability $1$.
To the best of our knowledge, this is the first \acl{NE} convergence result for game-theoretic learning with delayed, payoff-based feedback.


\para{Related work}

The no-regret properties of \ac{OGD} in settings with delayed feedback was recently considered by \citet{QK15} who proposed a natural extension of \ac{OGD} where the player performs a \emph{batched} gradient update the moment gradients are received.
Doing so, \citet{QK15} showed that if the total delay over a horizon $\nRuns$ is $\Delay_{\nRuns} = \sum_{\run=\start}^{\nRuns} \delay_{\run}$, \ac{OGD} enjoys a regret bound of the form $\bigoh(\sqrt{\nRuns + \Delay_{\nRuns}})$.
This bound echoes a string of results obtained in the \ac{MAB} literature under different assumptions:
for instance,
\citet{JGS13} and \citet{VCP17} assume that the origin of the information is known;
\citet{QK15} and \citet{PBSS+18} do not make this assumption and instead consider an ``anonymized'' feedback enviroment;
etc.

When the action space is finite, online learning with delayed feedback has also been explored in the context of adversarial \acp{MAB}.
In this context, \citet{TCS19} bound the regret in this case with the cumulative delay, which, in our notation, would be $\bigoh(\nRuns^{1+\alpha})$.
Taking into account the non-square-root scaling of the regret due to the lack of gradient observations, this would conceivably lead to a bound similar to that of \cref{thm:regret} for a \ac{MAB} setting.
Related papers which provide adaptive tuning to the unknown sum of delays are the works of \citet{JGS16,ZS20}, while \citet{bistritz2019online} and \citep{zhou2019learning} provide further results in adversarial and linear contextual bandits respectively.
However, the algorithms used in these works have little to do with \ac{OGD}.

Likewise, no-regret learning in bandit convex optimization has a long history dating back at least to \citet{Kle04} and \citet{FKM05}.
The standard \ac{OGD} policy with \ac{SPSA} gradient estimates achieves an $\bigoh(\nRuns^{3/4})$ regret bound, and the $\nRuns^{3/4}$ term in our bound is indeed related to this estimate.
Using sophisticated kernel estimation techniques, \citet{BE16,BE17} decreased this bound to $\bigoh(\nRuns^{1/2})$, suggesting an interesting interplay with our work.
However, very little is known when the learner has to cope \emph{simultaneously} with delayed and payoff-based feedback.

In the \ac{MAB} setting, the work of \citet{JGS13} provides an answer for mixed-strategy learning over finite-action spaces, but the online convex optimization case is completely different.
In particular, a major difficulty that arises is that the batch update approach of \citet{QK15} cannot be easily applied with stochastic estimates of the received gradient information (or when attempting to infer such information from realized payoffs).
This issue was highlighted in the work of \citet{ZMBG+17-NIPS} who employed a batching strategy similar to that of \citet{QK15} in a game-theoretic context with \emph{perfect} gradient information.
Because of this, online learning in the presence of delayed reward/feedback structures requires new tools and techniques.

On the game theory side, \citet{KKDB15} and \citet{BKTB16} studied the \acl{NE} convergence properties of no-regret learning in specific classes of continuous games (zero-sum and potential games).
The work of \citet{MZ19} and its follow-ups \citep{ZMMB+17,ZMMB+17-CDC,ZMAB+18-NIPS,lin2020finite} provided an extension to the class of monotone games with varying degrees of generality;
however, all these works rely on the availability of gradients in the learning process.
In sharp contrast to this, \citet{BBF18} recently considered payoff-based learning in games with one-dimensional action sets, and they established convergence to \acl{NE} under a synchronous, two-point, ``sample-then-play'' bandit strategy.
More recently, \citet{BLM18} showed that no-regret learning with payoff-based feedback converges to \acl{NE} in strongly monotone games, but it is assumed that actions are synchronized across players and rewards are assumed to arrive instantaneously.
A model of learning with delays was provided by \citet{ZMBG+17-NIPS} but their analysis and learning strategy only applies to perfect gradient information:
the case of noisy \textendash\ or, worse, \emph{payoff-based} \textendash\ delayed feedback was stated in that paper as a challenging open issue.
Our paper settles this open question in the affirmative.

\section{The model}
\label{sec:model}

\subsection{The general framework}

The general online optimization framework that we consider can be represented as the following sequence of events (presented for the moment from the viewpoint of a single, focal agent):
\begin{itemize}
\item
At each stage $\run=\running$, of the process, the agent picks an \emph{action} $\state_{\run}$ from a compact convex subset $\points$ of a $\vdim$-dimensional real space $\vecspace$.
\item
The choice of action generates a \emph{reward} $\est\pay_{\run} = \pay_{\run}(\state_{\run})$ based on a concave function $\pay_{\run}\from\points\to\R$ (assumed unknown to the player at stage $\run$).
\item
Simultaneously, $\state_{\run}$ triggers a \emph{delay} $\delay_{\run} \geq 0$ which determines the round $\run + \delay_{\run}$ at which the generated reward $\est\pay_{\run}$ will be received.
\item
The agent receives the rewards from all previous rounds $\info_{\run} = \setdef{\runalt}{\runalt + \delay_{\runalt} = \run}$, and the process repeats.
\end{itemize}

The above model has been stated in an abstract way that focuses on a single agent so as to provide the basis for the analysis to come.
The setting where there are no assumptions on the process generating the agent's payoff functions will be referred to as the \emph{unilateral} setting;
by contrast, in the multi-agent, \emph{game-theoretic} setting, the payoff functions of the focal agent will be determined by the stream of actions of the other players (see below for the details).
In the latter case, all variables other than the running counter $\run$ will be indexed by $\play$ to indicate their dependence on the $\play$-th player;
for example,
the action space of the $\play$-th player will be written $\points^{\play}$,
the corresponding action chosen by at stage $\run$ will be denoted $\state_{\run}^{\play}$,
etc.
For concreteness, we provide a diagrammatic illustration in \cref{fig:delays} above.


\begin{figure}[tbp]
\begin{center}
\small

\begin{tikzpicture}
[scale=1.05,
nodestyle/.style={circle,draw=black,fill=gray!10, inner sep=2pt},
edgestyle/.style={-},
>=stealth]

\def\dx{1}
\def\dy{.6}

\coordinate (T0) at (0,0);
\coordinate (X0) at (0,-\dy);
\coordinate (D0) at (0,-2*\dy);
\coordinate (F0) at (0,-3*\dy);

\coordinate (T1) at (1,0);
\coordinate (X1) at (1,-\dy);
\coordinate (D1) at (1,-2*\dy);
\coordinate (F1) at (1,-3*\dy);

\coordinate (T2) at (2,0);
\coordinate (X2) at (2,-\dy);
\coordinate (D2) at (2,-2*\dy);
\coordinate (F2) at (2,-3*\dy);

\coordinate (T3) at (3,0);
\coordinate (X3) at (3,-\dy);
\coordinate (D3) at (3,-2*\dy);
\coordinate (F3) at (3,-3*\dy);

\coordinate (T4) at (4,0);
\coordinate (X4) at (4,-\dy);
\coordinate (D4) at (4,-2*\dy);
\coordinate (F4) at (4,-3*\dy);

\coordinate (T5) at (5,0);
\coordinate (X5) at (5,-\dy);
\coordinate (D5) at (5,-2*\dy);
\coordinate (F5) at (5,-3*\dy);

\coordinate (T6) at (6,0);
\coordinate (X6) at (6,-\dy);
\coordinate (D6) at (6,-2*\dy);
\coordinate (F6) at (6,-3*\dy);

\coordinate (T7) at (7,0);
\coordinate (X7) at (7,-\dy);
\coordinate (D7) at (7,-2*\dy);
\coordinate (F7) at (7,-3*\dy);


\coordinate (Tlast) at (7,0);
\coordinate (Xlast) at (7,-\dy);
\coordinate (Dlast) at (7,-2*\dy);
\coordinate (Flast) at (7,-3*\dy);

\node (T0) at (T0) {$\run$};
\node (X0) at (X0) [text = DarkBlue] {$\state_{\run}$};
\node (D0) at (D0) [text = DarkMagenta] {$\delay_{\run}$};
\node (F0) at (F0) [text = DarkGreen] {$\rewards_\run$};

\node (T1) at (T1) [nodestyle] {$1$};
\node (X1) at (X1) [text = DarkBlue] {$\state_{1}$};
\node (D1) at (D1) [text = DarkMagenta] {$0$};
\node (F1) at (F1) [text = DarkGreen] {$\{\est\pay_{1}\}$};
\draw [edgestyle,->,loop above] (T1) to (T1);

\node (T2) at (T2) [nodestyle] {$2$};
\node (X2) at (X2) [text = DarkBlue] {$\state_{2}$};
\node (D2) at (D2) [text = DarkMagenta] {$3$};
\node (F2) at (F2) [text = DarkGreen] {$\{\}$};

\node (T3) at (T3) [nodestyle] {$3$};
\node (X3) at (X3) [text = DarkBlue] {$\state_{3}$};
\node (D3) at (D3) [text = DarkMagenta] {$1$};
\node (F3) at (F3) [text = DarkGreen] {$\{\}$};

\node (T4) at (T4) [nodestyle] {$4$};
\node (X4) at (X4) [text = DarkBlue] {$\state_{4}$};
\node (D4) at (D4) [text = DarkMagenta] {$2$};
\node (F4) at (F4) [text = DarkGreen] {$\{\est\pay_{3}\}$};
\draw [edgestyle,->,bend left] (T3.north east) to (T4.north west);

\node (T5) at (T5) [nodestyle] {$5$};
\node (X5) at (X5) [text = DarkBlue] {$\state_{5}$};
\node (D5) at (D5) [text = DarkMagenta] {$0$};
\node (F5) at (F5) [text = DarkGreen]  {$\{\est\pay_{2},\est\pay_{5}\}$};
\draw [edgestyle,->,bend left] (T2.north east) to (T5.north west);
\draw [edgestyle,->,loop above] (T5) to (T5);

\node (T6) at (T6) [nodestyle] {$6$};
\node (X6) at (X6) [text = DarkBlue] {$\state_{6}$};
\node (D6) at (D6) [text = DarkMagenta] {$1$};
\node (F6) at (F6) [text = DarkGreen] {$\{\est\pay_{4}\}$};
\draw [edgestyle,->,bend left] (T4.north east) to (T6.north west);



\node (Tlast) at (Tlast) {$\dots$};
\node (Xlast) at (Xlast) {$\dots$};
\node (Dlast) at (Dlast) {$\dots$};
\node (Flast) at (Flast) {$\dots$};
\draw [edgestyle,->,bend left] (T6.north east) to (Tlast.north west);

%
%

\end{tikzpicture}
\end{center}
\caption{Schematic illustration of the delayed feedback framework considered in the paper.
Arrows illustrate the round to which the payoff is deferred.}
\label{fig:delays}
\end{figure}
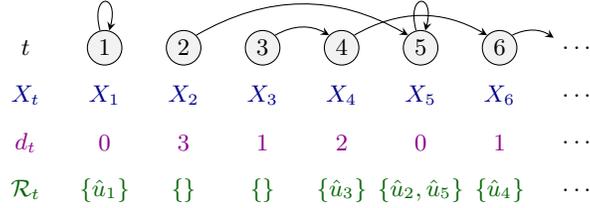


In both cases, our blanket assumptions for the stream of payoffs and the delays encountered by the players will be as follows:

\begin{assumption}
\label{asm:pay}
For each $\run=\running$, $\pay_{\run}$ is
concave in $\point$,
$\vbound$-Lipschitz continuous,
and
$\smooth$-Lipschitz smooth.
Specifically, the gradient $\payv_{\run}(\point) \equiv \nabla_{\point} \pay_{\run}(\point)$ of $\pay_{\run}$ is bounded by $\vbound$ and satisfies $\norm{\payv_{\run}(\pointalt) - \payv_{\run}(\point)} \leq \smooth \norm{\pointalt - \point}$ for all $\point,\pointalt\in\points$.
\end{assumption}

\begin{assumption}
\label{asm:delay}
The delays $\delay_{\run}$ grow asymptotically as $\delay_{\run} = o(\run^{\alpha})$ for some $\alpha<1$.
\end{assumption}

Regarding the delay assumption above, large-scale analytic studies have shown that long delays \emph{are} observed in practice:
in a study by \citet{Chapelle} with data from the real-time bidding company Criteo, it was found that more than $10\%$ of the conversions were more than two weeks old.
Moreover, the conclusion of the same study was that the distribution of delays in online advertising can be fitted reasonably well by long-tailed distributions,
especially when conditioning on context and feature variables available to the advertiser, thus justifying the assumption of a possibly unbounded delay between choosing an action and receiving a reward.
We also note here that we are making no further assumptions on the way the sequence of delays is generated:
conceivably, delays could even be determined adversarially, as in \citet{QK15}.

\subsection{Multi-agent considerations}

For the multi-agent case, suppose there is a finite set of players $\players = \{1,\dotsc,\nPlayers\}$, each with their own action space $\points^{\play} \subseteq \R^{\vdim^{\play}}$ (always assumed convex and compact).
In this case, it will be convenient to encode the players' joint action profile $\point = (\point^{\play})_{\play\in\players} \in \points \equiv \prod_{\play\in\players} \points^{\play}$ by means of the shorthand $(\point^{\play};\point^{-\play}) \equiv (\point^{1},\dotsc,\point^{\play},\dotsc,\point^{\nPlayers})$ which highlights the action $\point^{\play}\in\points^{\play}$ of the $\play$-th player against the action profile $\point^{-\play} \in \points^{-\play} \equiv \prod_{\playalt\neq\play} \points^{\playalt}$ of $\play$'s opponents.
The payoff to each player $\play\in\players$ for a given action profile $\point \in \points$ will then be determined by an associated \emph{payoff \textpar{\emph{or} utility} function} $\pay^{\play}\from\points\to\R$, assumed here and throughout to be concave in the action variable $\point^{\play}$ of the player in question.
We will refer to the tuple $\game \equiv \game(\players,\points,\pay)$ as an \emph{$\nPlayers$-player continuous game} \citep{Deb52,Ros65,FT91}.

In this context, if $\state_{\run} = (\state_{\run}^{1},\dotsc,\state_{\run}^{\nPlayers}) \in \points$ is a sequence of joint actions, the payoff function encountered by the $\play$-th player at stage $\run$ will be given by
\begin{equation}
\label{eq:pay-play}
\pay_{\run}^{\play}(\point^{\play})
	\equiv \pay^{\play}(\point^{\play};\state_{\run}^{-\play})
	\quad
	\text{for all $\point^{\play} \in \points ^{\play}$},
\end{equation}
leading to the gradient expression
\begin{align}
\label{eq:payv-t}
\payv_{\run}^{\play}(\point^{\play})
	&\equiv \nabla_{\point^{\play}} \pay_{\run}^{\play}(\point^{\play};\state_{\run}^{-\play})
	= \payv^{\play}(\point^{\play};\state_{\run}^{-\play})
\intertext{where}
\label{eq:payv}
\payv^{\play}(\point)
	&= \nabla_{\point^{\play}} \pay^{\play}(\point^{\play};\point^{-\play}).
\end{align}
denotes the individual payoff gradient of the $\play$-th player at the action profile $\point\in\points$.
In the rest of our paper, we will assume that $\pay^{\play}$ is Lipschitz continuous and Lipschitz smooth, so \cref{asm:pay} is satisfied by default in this case.

\subsection{Regret and equilibrium}

With these preliminaries at hand, our principal performance indicators will be the minimization of \emph{regret} and the notion of a \acli{NE}.
Starting with the former, the regret of an agent in the unilateral setting is defined over a horizon of $\nRuns$ stages as
\begin{equation}
\label{eq:regret}
\reg(\nRuns)
	= \max_{\point\in\points} \sum_{\run=1}^{\nRuns} \bracks{\pay_{\run}(\point) - \pay_{\run}(\state_{\run})}.
\end{equation}
and, in the presence of randomness, we similarly introduce the agent's mean (or pseudo-) regret as 
\begin{equation}
\label{eq:reg-mean}
\overline\reg(\nRuns)
	= \max_{\point\in\points}
		\exof{\sum_{\run=1}^{\nRuns}
			\bracks{\pay_{\run}(\point) - \pay_{\run}(\state_{\run})}}.
\end{equation}
Accordingly, we will say that a sequence of actions $\state_{\run}\in\points$, $\run=\running$, leads to \emph{no regret} if $\reg(\nRuns) = o(\nRuns)$.

On the other hand, the notion of a \acdef{NE} is a purely game-theoretic concept which characterizes those action profiles that are resilient to unilateral deviations.
In more detail, we say that $\eq \in \points$ is a \acl{NE} of $\game$ when
\begin{equation}
\label{eq:NE}
\tag{NE}
\pay^{\play}(\eq)
	\geq \pay^{\play}(\point^{\play};\eq^{-\play})
\end{equation}
for all $\point^{\play}\in\points^{\play}$ and all $\play\in\players$.
In full generality, the relation between \aclp{NE} and regret minimization is feeble at best:
if all players play a \acl{NE} for all $\run=\running$, they will trivially have no regret;
the converse however fails by a longshot, see \eg \citet{VZ13} and references therein.%
\footnote{Specifically, \citet{VZ13} show that ther are games whose set of coarse correlated equilibria contain strategies that assign positive probability \emph{only} to strictly dominated strategies.}

In the game-theoretic literature, existence and uniqueness of equilibrium points has been mainly studied under a condition known as \acdef{DSC} \citep{Ros65}, which we define here as:
\begin{equation}
\label{eq:DSC}
\tag{DSC}
\sum_{\play\in\players}
	\lambda^{\play} \braket{\payv^{\play}(\pointalt) - \payv^{\point}(\point)}{\pointalt^{\play} - \point^{\play}}
	< 0
\end{equation}
for some $\lambda^{\play}>0$ and all $\point,\pointalt\in\points$ with $\pointalt\neq\point$.

In optimization, this condition is known as \emph{monotonicity} \citep{BC17}, so we will interchange the terms ``diagonally strictly concave'' and ``\emph{monotone}''for games that satisfy \eqref{eq:DSC}.
Under \eqref{eq:DSC}, \citet{Ros65} showed the existence of a unique \acl{NE};
this is of particular importance to online advertising because of the following auction mechanism that can be seen as a monotone game:

\medskip
\begin{example}[Kelly auctions]
\label{ex:Kelly}
Consider a provider with a splittable commodity (such as advertising time or website traffic to which a given banner will be displayed).
Any fraction of this commodity can be auctioned off to a set of $\nPlayers$ bidders (players) who can place monetary bids $\point^{\play}\geq0$ up to each player's total budget $b^{\play}$ to acquire it.
Once all players have placed their respective bids, the commodity is split among the bidders proportionally to each player's bid;
specifically, the $\play$-th player gets a fraction
\(
\rho^{\play}
	= \point^{\play} \big/ \parens{c + \sum_{\playalt\in\players} \point_{\playalt}}
\)
of the auctioned commodity (where $c\geq0$ is an ``entry barrier'' for bidding on the resource).
A simple model for the utility of player $\play$ is then given by the \emph{Kelly auction mechanism} \citep{KMT98}:
\begin{equation}
\label{eq:pay-auction}
\pay^{\play}(\point^{\play};\point^{-\play})
	= g^{\play} \rho^{\play} - \point^{\play},
\end{equation}
where
$g^{\play}$ represents the marginal gain of player $\play$ from a unit of the commodity.
Using standard arguments, it is easy to show that the resulting game satisfies \eqref{eq:DSC}, see \eg \citet{Goo80}.
\end{example}

Other example of games satisfying \eqref{eq:DSC} are (strictly) convex-concave zero-sum games \citep{JNT11}, routing games \citep{NRTV07}, Cournot oligopolies \citep{MZ19}, power control \citep{SFPP10,MBNS17}, etc.
For an extensive discussion of monotonicity in game theory, see \citet{FK07,PSPF10,San15,LRS19} and references therein.
In the rest of our paper, we will assume that all games under consideration satisfy \eqref{eq:DSC}.

\section{The \ac{GOLD} algorithm}
\label{sec:method}

We are now in a position to state the proposed \acdef{GOLD} method.
As the name suggests, the method concurrently addresses the two aspects of the online learning framework presented in the previous section, namely the delays encountered and the lack of gradient information.
We describe each component in detail below, and we provide a pseudocode implementation of the method as \cref{alg:GOLD} above;
for convenience and notational clarity, we take the viewpoint of a focal agent throughout, and we do not carry the player index $\play$.


\begin{algorithm*}[tbp]
\tt
\small
\caption{\acf{GOLD}	\hfill{\footnotesize[focal player view]}}
\label{alg:GOLD}

\setstretch{1.125}
\begin{algorithmic}[1]
\Require
	step-size $\step_{\run}>0$,
	sampling radius $\mix_{\run}>0$,
	safety set $\ball_{\radius}(\base)\subseteq\points$
\State
	choose $\state_{\start}\in\points$;
	set $\pool_{0} \leftarrow \varnothing$,
	$\est\pay_{\infty} = 0$,
	$\unitvar_{\infty} = 0$
	\Comment{initialization}%
\For{$\run=\running$}
	\State	
		draw $\unitvar_{\run}$ uniformly from $\sphere^{\vdim}$
		\Comment{perturbation direction}%
	\State
		set $\pertvar_{\run}\leftarrow\unitvar_{\run} - (\state_{\run} - \base)/\radius$
		\Comment{feasibility adjustment}%
	\State
		play $\est\state_{\run} \leftarrow \state_{\run} + \mix_{\run} \pertvar_{\run}$
		\Comment{player chooses action}%
	\State
		generate payoff $\est\pay_{\run} = \pay(\est\state_{\run})$
		\Comment{associated payoff}%
	\State
		trigger delay $\delay_{\run}$
		\Comment{delay for payoff}%
	\State
		collect rewards $\info_{\run} = \setdef{\runalt}{\runalt + \delay_{\runalt} = \run}$
		\Comment{receive past payoffs}%
	\State
		update pool $\pool_{\run} \leftarrow \pool_{\run-1} \cup \info_{\run}$
		\Comment{enqueue received info}%
	\State
		take $\head_{\run} = \min \pool_{\run}$;
		set $\pool_{\run} \leftarrow \pool_{\run}\exclude{\head_{\run}}$
		\Comment{dequeue oldest info}%
	\State
		set $\est\payv_{\run}\leftarrow (\vdim/\mix_{\head_{\run}}) \est\pay_{\head_{\run}} \, \unitvar_{\head_{\run}}$
		\Comment{estimate gradient}%
	\State
		update $\state_{\run+1} \leftarrow \Eucl(\state_{\run} + \step_{\run} \est\payv_{\run})$
		\Comment{update pivot}%
\EndFor
\end{algorithmic}
\end{algorithm*}


\subsection{Delays}

To describe the way that the proposed method tackles delays, it is convenient to decouple the two issues mentioned above and instead assume that, at time $\run$, along with the generated rewards $\est\pay_{\runalt}$ for $\runalt \in \info_{\run} = \setdef{\runalt}{\runalt + \delay_{\runalt} = \run}$, the agent also receives \emph{perfect gradient information} for the corresponding rounds, \ie gets to observe $\payv_{\runalt}(\state_{\runalt})$ for $\runalt \in \info_{\run}$.
We stress here that this assumption is \emph{only} made to illustrate the way that the algorithm is handling delays, and will be dropped in the sequel.

With this in mind, the first thing to note is that the set of information received at a given round might be empty, \ie we could have $\info_{\run} = \varnothing$ for some $\run$.
To address this sporadic shortage of information, we introduce a \emph{pooling strategy}, not unlike the one considered by \citet{JGS13} in the context of \acl{MAB} problems.
Specifically, we assume that, as information is received over time, the agent adds it to an \emph{information pool} $\pool_{\run}$, and then uses the oldest information available in the pool (where ``oldest'' refers to the time at which the information was generated).

Specifically, starting at $\run=0$ with an empty pool $\pool_{0} = \varnothing$ (since there is no information at the beginning of the game), the agent's information pool is updated following the recursive rule
\begin{equation}
\label{eq:pool}
\pool_{\run}
	= \pool_{\run-1} \cup \info_{\run} \setminus \{\head_{\run}\}
\end{equation}
where
\begin{equation}
\label{eq:head}
\head_{\run}
	= \min(\pool_{\run-1} \cup \info_{\run})
\end{equation}
denotes the oldest round from which the agent has unused information at round $\run$.
Heuristically, this scheme can be seen as a priority queue in which data $\payv_{\runalt}(\state_{\runalt})$, $\runalt\in\info_{\run}$, arrives at time $\run$ and is assigned priority $\runalt$ (\ie the round from which the data originated);
subsequently, gradient data is dequeued one at a time, in ascending priority order (\ie oldest information is utilized first).

In view of the above, if we let
\(
\est\payv_{\run}
	= \payv_{\head_{\run}}(\state_{\head_{\run}})
\)
denote the gradient information dequeued at round $\run$, we will use the basic gradient update
\begin{equation}
\label{eq:GD-base}
\state_{\run+1}
	= \Eucl(\state_{\run} + \step_{\run} \est\payv_{\run}),
\end{equation}
where
$\step_{\run}>0$ is a variable step-size sequence (discussed extensively in the sequel),
and
$\Eucl(\dpoint) = \argmin_{\point\in\points} \norm{\point - \dpoint}$ denotes the Euclidean projection to the agent's action space $\points$.
Of course, an important issue that arises in the update step \eqref{eq:pool} is that, despite the parsimonious use of gradient information, it may well happen that the agent's information pool $\pool_{\run}$ is empty at time $\run$ (\eg if at time $\run=\start$, we have $\delay_{\start} > 0$).
In this case, following the standard convention $\inf\varnothing=\infty$, we set $\head_{\run} = \infty$ (since it is impossible to ever have information about the stage $\run=\infty$), and, by convention, we also set $\payv_{\infty} = 0$.
Under this convention, \eqref{eq:GD-base} can be written in more explicit form as
\begin{align}
\label{eq:GD-perfect}
\state_{\run+1}
	&= \Eucl(\state_{\run} + \step_{\run} \one_{\pool_{\run}\neq\varnothing}\est\payv_{\run})
	\notag\\
	&= \begin{cases}
		\state_{\run}
			&\quad
			\text{if $\pool_{\run} = \varnothing$},
			\\
		\Eucl(\state_{\run} + \step_{\run} \est\payv_{\run})
			&\quad
			\text{otherwise}.
		\end{cases}
\end{align}
In this way, the gradient update \eqref{eq:GD-base} can be seen as a delayed variant of Zinkevich's online gradient descent policy;
however, in contrast to ``batching-type'' policies \citep{QK15,ZMBG+17-NIPS}, there is no gradient aggregation:
received gradients are introduced in the algorithm one at a time, oldest information first.

\subsection{Payoff-based gradient estimation}

We now proceed to describe the process with which the agent infers gradient information from the received rewards.
To that end, following \citet{Spa97} and \citet{FKM05}, we will use a one-point, \acf{SPSA} approach that was also recently employed by \citet{BLM18} for game-theoretic learning with bandit feedback (but no delays or asynchronicities).
In our delayed reward setting (and always from the viewpoint of a single, focal agent), this process can be described as follows:
\begin{enumerate}
\item
Pick a pivot state $\state_{\run}$ to estimate its payoff gradient.
\item
Pick a sampling radius $\mix_{\run} > 0$ (detailed below) and draw a random sampling direction $\unitvar_{\run}$ from the unit sphere $\sphere^{\vdim}$.
\item
Introduce an adjustment $\pertvar_{\run}$ to $\unitvar_{\run}$ to ensure feasibility of the sampled action
\begin{equation}
\label{eq:action}
\hat\state_{\run}
	= \state_{\run} + \mix_{\run}\pertvar_{\run}
\end{equation}
\item
Generate the reward $\est\pay_{\run} = \pay_{\run}(\est\state_{\run})$ and estimate the gradient of $\pay_{\run}$ at $\state_{\run}$ as
\begin{equation}
\label{eq:SPSA}
\est\nabla_{\run}
	= \frac{\vdim}{\mix_{\run}} \est\pay_{\run} \unitvar_{\run}
\end{equation}
\end{enumerate}
More precisely, the feasibility adjustment mentioned above is a skewing operation of the form
\begin{equation}
\label{eq:pertvar}
\pertvar_{\run}
	= \unitvar_{\run} - \radius^{-1}(\state_{\run} - \base)
\end{equation}
where $\base\in\points$ and $\radius>0$ are such that the radius-$\radius$ ball $\ball_{\radius}(\base)$ has $\ball_{\radius}(\base) \subseteq \points$, ensuring in this way that $\est\state_{\run} \in \points$ whenever $\state_{\run}\in\points$;
for more details, see \citet{BCB12}.

\subsection{Learning with delayed, payoff-based feedback}

Of course, the main problem in the \ac{SPSA} estimator \eqref{eq:SPSA} lies in the fact that, in a delayed reward structure, the payoff generated at time $\run$ would only be observed at stage $\run+\delay_{\run}$.
With this in mind, we make the following bare-bones assumptions:
\begin{itemize}
\item
Expectations are taken relative to the inherent randomness in the sampling direction $\unitvar_{\run}$.
\item
The agent retains in memory the chosen sampling direction $\unitvar_{\runalt}$ for all $\runalt \leq \run$ that have not yet been utilized, \ie for all $\runalt \in \unused_{\run} \equiv \{1,\dotsc,\run\} \setminus \setdef{\head_{\ell}}{\ell=1,\dotsc,\run}$.%
\footnote{In the appendix, we show that $\abs{\unused_{\run}} \leq \max_{1\leq\runalt\leq\run} \delay_{\runalt}$, so this requirement is fairly mild (linear) relative to the delays, especially when the delay distribution is exponential \textendash\ \eg as in the online advertising study of \citet{Chapelle}.}
\end{itemize}
In this way, to combine the two frameworks described above (delays \emph{and} bandit feedback), we will employ the gradient estimator
\begin{equation}
\label{eq:oracle}
\est\payv_{\run}
	= \one_{\pool_{\run}\neq\varnothing} \est\nabla_{\head_{\run}}
	= \frac{\vdim}{\mix_{\head_{\run}}}
		\est\pay_{\head_{\run}}
		\,\unitvar_{\head_{\run}}
\end{equation}
with the convention $\est\pay_{\infty} = 0$, $\unitvar_{\infty} = 0$ if $\head_{\run} = \infty$ \textendash\ \ie if the player's information pool $\pool_{\run}$ is empty at stage $\run$.
Thus, putting everything together, we obtain the \acdef{GOLD} policy:
\begin{equation}
\label{eq:GOLD}
\tag{GOLD}
\begin{aligned}
\est\state_{\run}
	&= \state_{\run} + \mix_{\run} \pertvar_{\run}
	\\
\state_{\run+1}
	&= \Eucl(\state_{\run} + \step_{\run} \est\payv_{\run})	
\end{aligned}
\end{equation}
with $\pertvar_{\run}$ and $\est\payv_{\run}$ given by \cref{eq:pertvar,eq:oracle} respectively (for a pseudocode implementation of the policy, see \cref{alg:GOLD}).
We will examine the learning properties of this policy in the next section.


\section{Analysis and guarantees}
\label{sec:results}

\subsection{Statement and discussion of main results}

We are now in a position to state and prove our main results for the \ac{GOLD} algorithm under \cref{asm:delay,asm:pay}.
We begin with the algorithm's regret guarantees in the unilateral setting:

\begin{restatable}{theorem}{regret}
\label{thm:regret}
Suppose that an agent is running \eqref{eq:GOLD} with step-size and sampling radius sequences of the form
$\step_{\run} = \step/\run^{c}$ and $\mix_{\run} = \mix/\run^{b}$ for some $\step,\mix > 0$ and
$b = \min\{1/4,1/3-\alpha/3\}$,
$c = \max\{3/4,2/3+\alpha/3\}$.
Then, the agent enjoys the mean regret bound
\begin{equation}
\label{eq:reg-bound}
\overline\reg(\nRuns)
	= \tilde\bigoh\parens[\big]{\nRuns^{3/4} + \nRuns^{2/3 + \alpha/3}}.
\end{equation}
\end{restatable}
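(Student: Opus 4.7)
The plan is to decompose the per-round regret into a sampling perturbation cost and a pivot regret, and to track the latter through a reindexing argument that aligns each delayed gradient update with its originating round. For the sampling cost, Lipschitzness of $\pay_{\run}$ gives $\pay_{\run}(\state_{\run}) - \pay_{\run}(\est\state_{\run}) \leq \vbound \mix_{\run} \norm{\pertvar_{\run}}$, and since $\norm{\pertvar_{\run}}$ is uniformly bounded under the safety-set assumption $\ball_{\radius}(\base) \subseteq \points$, this contributes $\tilde\bigoh(\sum_{\run} \mix_{\run}) = \tilde\bigoh(\nRuns^{1-b})$ in expectation. Concavity of $\pay_{\run}$ then yields $\pay_{\run}(\sol) - \pay_{\run}(\state_{\run}) \leq \braket{\payv_{\run}(\state_{\run})}{\sol - \state_{\run}}$, so the remaining task is to bound the expectation of $\sum_{\run} \braket{\payv_{\run}(\state_{\run})}{\sol - \state_{\run}}$ against a best fixed comparator $\sol$.

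The crucial step is to reindex this sum by the origin rounds of the gradients that actually enter the \ac{OGD} updates. Let $\mathcal{S}_{\nRuns} = \setdef{\runalt \leq \nRuns}{\runalt = \head_{\run} \text{ for some } \run \leq \nRuns}$ and, for $\runalt \in \mathcal{S}_{\nRuns}$, let $\tau(\runalt)$ denote the corresponding processing round. A preliminary queue-size lemma, proved by induction on the pool update rule~\eqref{eq:pool}, yields $\abs{\pool_{\run}} \leq \max_{\runalt \leq \run} \delay_{\runalt} = o(\run^{\alpha})$ under \cref{asm:delay}; this ensures that the unprocessed indices contribute at most $o(\nRuns^{\alpha})$ terms (each $\bigoh(\vbound\diam\points)$) and also that $\run - \head_{\run} = o(\run^{\alpha})$ whenever $\head_{\run} < \infty$. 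For $\runalt \in \mathcal{S}_{\nRuns}$ I then split
\[
\braket{\payv_{\runalt}(\state_{\runalt})}{\sol - \state_{\runalt}}
	= \braket{\payv_{\runalt}(\state_{\runalt})}{\sol - \state_{\tau(\runalt)}}
	+ \braket{\payv_{\runalt}(\state_{\runalt})}{\state_{\tau(\runalt)} - \state_{\runalt}},
\]
so that the first piece is aligned with the \ac{OGD} pivot at processing time and the second becomes a ``drift'' penalty that captures the cost of the delay.

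For the aligned piece, I insert the smoothed gradient $\tilde\payv_{\runalt}(\state_{\runalt}) = \exof{\est\payv_{\tau(\runalt)} \given \filter_{\runalt-1}}$ and further split into three standard pieces: a \ac{SPSA} smoothing bias of size $\bigoh(\smooth \mix_{\runalt})$ controlled by $\smooth$-smoothness of $\pay_{\runalt}$; a conditionally zero-mean martingale increment $\braket{\tilde\payv_{\runalt}(\state_{\runalt}) - \est\payv_{\tau(\runalt)}}{\sol - \state_{\tau(\runalt)}}$ (here the measurability is clean because the pivots depend only on the $\unitvar$-variables used in strictly earlier updates, so $\sol - \state_{\tau(\runalt)} \in \filter_{\runalt-1}$); and the \ac{OGD} residue $\braket{\est\payv_{\run}}{\sol - \state_{\run}}$, reindexed back to $\run = \tau(\runalt)$. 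The residue is handled by the standard energy telescoping
\[
\braket{\est\payv_{\run}}{\sol - \state_{\run}}
	\leq \frac{\norm{\state_{\run} - \sol}^{2} - \norm{\state_{\run+1} - \sol}^{2}}{2\step_{\run}}
	+ \frac{\step_{\run}}{2}\norm{\est\payv_{\run}}^{2},
\]
which, combined with $\norm{\est\payv_{\run}} = \bigoh(\vdim / \mix_{\head_{\run}})$ and Abel summation for the variable step-size, produces a contribution of order $\tilde\bigoh(\nRuns^{c} + \nRuns^{1 + 2b - c})$. For the drift piece, $\norm{\payv_{\runalt}} \leq \vbound$ together with the telescoping $\norm{\state_{\tau(\runalt)} - \state_{\runalt}} \leq \sum_{r = \runalt}^{\tau(\runalt)-1} \step_{r} \norm{\est\payv_{r}}$, combined with $\tau(\runalt) - \runalt = o(\runalt^{\alpha})$ and $\step_{r}\norm{\est\payv_{r}} = \bigoh(r^{b-c})$, yields a drift contribution of order $\tilde\bigoh(\nRuns^{1 + \alpha + b - c})$.

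Putting the four orders $\nRuns^{1-b}$, $\nRuns^{c}$, $\nRuns^{1 + 2b - c}$, and $\nRuns^{1 + \alpha + b - c}$ together and minimizing the envelope over $(b,c)$ recovers exactly the exponents in the statement: $(b,c) = (1/4, 3/4)$ when $\alpha \leq 1/4$ and $(b,c) = (1/3 - \alpha/3, 2/3 + \alpha/3)$ otherwise. The main technical obstacle is the drift term, which is the only piece whose rate depends nontrivially on $\alpha$; controlling it rigorously relies on the queue-size lemma for the \ac{FIFO} pool and on a careful propagation of step-size and sampling-radius ratios across delay windows that may stretch to length $o(\nRuns^{\alpha})$. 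The remaining pieces are then routine \ac{SPSA} bias\,/\,variance estimates paired with standard \ac{OGD} bookkeeping.
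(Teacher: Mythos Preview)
Your approach is correct and reaches the stated bound, but the decomposition is genuinely different from the paper's. The paper keeps the processing round $\run$ as the primary index throughout: starting from the template inequality of \cref{lem:template}, which contains $\braket{\payv(\state_{\head_{\run}})}{\state_{\run} - \base}$, it telescopes the \emph{gradient argument},
\[
\braket{\payv(\state_{\head_{\run}})}{\state_{\run} - \base}
	= \braket{\payv(\state_{\run})}{\state_{\run} - \base}
	+ \sum_{\runalt=\head_{\run}}^{\run-1} \braket{\payv(\state_{\runalt}) - \payv(\state_{\runalt +1})}{\state_{\run} - \base},
\]
and bounds the correction via $\smooth$-Lipschitz smoothness of $\payv$. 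You instead reindex by origin round $\runalt$ and telescope the \emph{anchor}, writing $\braket{\payv_{\runalt}(\state_{\runalt})}{\sol - \state_{\runalt}} = \braket{\payv_{\runalt}(\state_{\runalt})}{\sol - \state_{\tau(\runalt)}} + \braket{\payv_{\runalt}(\state_{\runalt})}{\state_{\tau(\runalt)} - \state_{\runalt}}$ and bounding the drift via the gradient bound $\vbound$ alone. Both routes produce an $A_{\run}$-type delay sum of the same order. The accounting also differs: the paper factors out $1/\step_{\nRuns}$ and invokes \cref{lem:3sums} to collapse everything to $\tilde\bigoh(\nRuns^{c})$, whereas you carry four explicit exponents $\nRuns^{1-b},\nRuns^{c},\nRuns^{1+2b-c},\nRuns^{1+\alpha+b-c}$ and optimize directly. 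Your route makes it more transparent which constraint binds (and it isolates the $\bigoh(\sum_{\run}\mix_{\run})$ cost of actually playing $\est\state_{\run}$, which the paper's proof does not separate out); the paper's route is a bit slicker once \cref{lem:3sums} is available.

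One measurability point needs care. You claim $\sol - \state_{\tau(\runalt)} \in \filter_{\runalt-1}$, but this fails if $\filter_{\runalt-1}$ is generated by the seeds $\unitvar_{1},\dotsc,\unitvar_{\runalt-1}$: the \ac{FIFO} heads $\head_{\run}$ are \emph{not} monotone in $\run$ (an index $\runalt'>\runalt$ can be dequeued before $\runalt$ if its delay is short enough), so $\state_{\tau(\runalt)}$ may depend on $\unitvar_{\runalt'}$ with $\runalt'>\runalt$. What \emph{is} true, and all you need, is that $\state_{\tau(\runalt)}$ never depends on $\unitvar_{\runalt}$ itself, since each index is dequeued exactly once. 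The clean fix is to condition on the processing-time filtration $\filter_{\tau(\runalt)}$ as in the paper: then $\state_{\tau(\runalt)}$ is measurable, $\exof{\est\payv_{\tau(\runalt)}\given\filter_{\tau(\runalt)}}$ is the smoothed gradient, and your martingale increment has zero conditional mean as intended.
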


\begin{remark*}
In the above, $\tilde\bigoh(\cdot)$ stands for ``$\bigoh(\cdot)$ up to logarithmic factors''.
The actual multiplicative constants that are hidden in the Landau ``big oh'' notation have a complicated dependence on the diameter of $\points$, the dimension of the ambient space, the range of the players' utility functions;
we provide more details on this in the paper's appendix.
\end{remark*}

For the game-theoretic setting, we will focus on games satisfying Rosen's \acl{DSC} condition (\eg as the Kelly auction example described in \cref{sec:model}).
In this general context, we have:

\begin{restatable}{theorem}{Nash}
\label{thm:Nash}
Let $\game$ be a continuous game satisfying \eqref{eq:DSC}, and suppose that each agent follows \eqref{eq:GOLD} with step-size and sampling radius sequences
$\step_{\run} = \step/\run^{c}$ and $\mix_{\run} = \mix/\run^{b}$ for some $\step,\mix > 0$ and
$b,c$ satisfying the conditions:
\begin{subequations}
\label{eq:params}
\begin{align}
2c - b
	&> 1 + \alpha,
	\\
b + c
	&> 1,
	\\
2c - 2b
	&> 1.
\end{align}
\end{subequations}
Then, with probability $1$, the sequence of play $\est\state_{\run}$ induced by \eqref{eq:GOLD} converges to the game's \textpar{necessarily} unique \acl{NE}.
\end{restatable}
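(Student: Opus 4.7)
\medskip

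\noindent\textbf{Proof proposal for \cref{thm:Nash}.}

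The plan is to conduct a stochastic Lyapunov/quasi-Fej\'er analysis centered on the unique \acl{NE} $\eq$ whose existence under \eqref{eq:DSC} is guaranteed by \citet{Ros65}. With $\lambda^{\play}>0$ as in \eqref{eq:DSC}, I would take the weighted energy
\begin{equation*}
E_{\run}
    \;=\; \txs\sum_{\play\in\players} \lambda^{\play} \norm{\state_{\run}^{\play} - \eq^{\play}}^{2},
\end{equation*}
and, using non-expansiveness of the Euclidean projection together with the per-player update $\state_{\run+1}^{\play}=\Eucl(\state_{\run}^{\play} + \step_{\run}\est\payv_{\run}^{\play})$, obtain the one-step inequality
\begin{equation*}
E_{\run+1}
    \;\leq\; E_{\run}
    \,+\, 2\step_{\run} \txs\sum_{\play} \lambda^{\play} \braket{\est\payv_{\run}^{\play}}{\state_{\run}^{\play}-\eq^{\play}}
    \,+\, \step_{\run}^{2} \txs\sum_{\play} \lambda^{\play} \norm{\est\payv_{\run}^{\play}}^{2}.
\end{equation*}

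\smallskip

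Next I would decompose the \ac{SPSA} estimator $\est\payv_{\run}^{\play}$ in the canonical way into (i)~the exact individual payoff gradient at the \emph{stale} state, $\payv^{\play}(\state_{\head_{\run}})$; (ii)~a smoothing bias $\sbias_{\run}^{\play}$ of size $\bigoh(\mix_{\head_{\run}}) = \bigoh(\run^{-b})$ (coming from the fact that one-point \ac{SPSA} estimates the gradient of the $\mix$-smoothed utility); and (iii)~a conditionally zero-mean noise term $\snoise_{\run}^{\play}$ with $\exof{\norm{\snoise_{\run}^{\play}}^{2} \given \filter_{\run}} = \bigoh(1/\mix_{\head_{\run}}^{2}) = \bigoh(\run^{2b})$. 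I would then rewrite $\payv^{\play}(\state_{\head_{\run}}) = \payv^{\play}(\state_{\run}) + [\payv^{\play}(\state_{\head_{\run}}) - \payv^{\play}(\state_{\run})]$, so that conditional expectation and \eqref{eq:DSC} yield
\begin{equation*}
\exof{E_{\run+1}\given\filter_{\run}}
    \;\leq\; E_{\run} \,-\, 2\step_{\run} \xi(\state_{\run}) \,+\, A_{\run}^{\mathrm{bias}} + A_{\run}^{\mathrm{lag}} + A_{\run}^{\mathrm{var}},
\end{equation*}
where $\xi(\point) := -\sum_{\play}\lambda^{\play}\braket{\payv^{\play}(\point)}{\point^{\play}-\eq^{\play}}$ is the strictly positive drift provided by \eqref{eq:DSC}. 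The error terms carry the orders $A_{\run}^{\mathrm{bias}} = \bigoh(\step_{\run}\mix_{\run}) = \bigoh(\run^{-(b+c)})$, $A_{\run}^{\mathrm{var}} = \bigoh(\step_{\run}^{2}/\mix_{\run}^{2}) = \bigoh(\run^{-2(c-b)})$, and the delay-induced lag term $A_{\run}^{\mathrm{lag}} = \bigoh(\step_{\run} \norm{\state_{\run}-\state_{\head_{\run}}})$.

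\smallskip

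The main obstacle is controlling $A_{\run}^{\mathrm{lag}}$, since the \ac{SPSA} increments have magnitude $\bigoh(1/\mix_{\run})=\bigoh(\run^{b})$ and they can accumulate over the lag $\run-\head_{\run}$. Here I would first verify, from the \ac{FIFO} queue dynamics together with \cref{asm:delay}, that $\run - \head_{\run} = \bigoh(\max_{\runalt\leq\run}\delay_{\runalt}) = o(\run^{\alpha})$; telescoping the update rule then yields $\norm{\state_{\run}-\state_{\head_{\run}}} = \bigoh((\run-\head_{\run})\,\step_{\head_{\run}}/\mix_{\head_{\run}}) = o(\run^{\alpha+b-c})$, so that $A_{\run}^{\mathrm{lag}} = o(\run^{\alpha + b - 2c})$. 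Under the hypotheses $b+c>1$, $2c-2b>1$, and $2c-b>1+\alpha$, all three of $A_{\run}^{\mathrm{bias}}$, $A_{\run}^{\mathrm{var}}$, $A_{\run}^{\mathrm{lag}}$ are summable, while $\sum_{\run}\step_{\run}=\infty$ since $c<1$ (a consequence of the three conditions).

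\smallskip

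With these summability estimates in hand, the remaining noise contribution $2\step_{\run}\sum_{\play}\lambda^{\play}\braket{\snoise_{\run}^{\play}}{\state_{\run}^{\play}-\eq^{\play}}$ is a martingale-difference sequence whose conditional second moment is $\bigoh(\step_{\run}^{2}/\mix_{\run}^{2})$, hence also summable. The Robbins--Siegmund almost-supermartingale theorem then ensures that $E_{\run}$ converges \as\ and that $\sum_{\run}\step_{\run}\,\xi(\state_{\run})<\infty$ \as. Because $\sum_{\run}\step_{\run}=\infty$ and $\xi$ is continuous and strictly positive off $\eq$, there must exist a subsequence along which $\state_{\run_{k}}\to\eq$; combined with the \as\ convergence of $E_{\run}$ (which forces $E_{\run}\to 0$ along the full sequence) this gives $\state_{\run}\to\eq$ \as, and finally $\est\state_{\run}=\state_{\run}+\mix_{\run}\pertvar_{\run}\to\eq$ since $\mix_{\run}\to 0$ and $\pertvar_{\run}$ is bounded.
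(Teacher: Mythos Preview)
Your proposal is correct and follows essentially the same route as the paper: a quasi-Fej\'er descent inequality for the distance to $\eq$ (the paper takes $\lambda^{\play}\equiv 1$, so your weighted energy is a mild generalization), the same decomposition of $\est\payv_{\run}$ into gradient-at-the-stale-state $+$ bias $+$ zero-mean noise, the same telescoping control of the lag via $\run-\head_{\run}=o(\run^{\alpha})$ (\cref{lem:delayedfeedback,cor:lag}), and the same three summability requirements that become exactly the sequences $A_{\run},B_{\run},C_{\run}$ of \eqref{eq:3seqs}. Your use of Robbins--Siegmund to obtain \emph{both} almost-sure convergence of $E_{\run}$ and $\sum_{\run}\step_{\run}\,\xi(\state_{\run})<\infty$ in one stroke is a slight streamlining of the paper's two-proposition split, which first applies Doob's theorem to a tail-compensated supermartingale to get the limit of $\norm{\state_{\run}-\eq}^{2}$, and then runs a separate contradiction argument (using the martingale law of large numbers on the noise) to extract the subsequence.

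The one concrete slip is your parenthetical claim that $c<1$ is ``a consequence of the three conditions'': it is not. For instance $c=1.1$, $b=0.5$, $\alpha=0$ satisfies all of \eqref{eq:params} yet gives $\sum_{\run}\step_{\run}<\infty$, which kills the subsequence extraction since $\liminf_{\run}\xi(\state_{\run})=0$ no longer follows. The condition $c\leq 1$ (equivalently $\sum_{\run}\step_{\run}=\infty$) is essential and has to be assumed; the paper's own proof also relies on $\nu_{\run}=\sum_{\runalt\leq\run}\step_{\runalt}\to\infty$ in its contradiction step without listing $c\leq 1$ among the hypotheses, so this is a shared omission rather than a defect specific to your argument.
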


The above results are our main guarantees for \eqref{eq:GOLD} so, before discussing their proof, some remarks are in order.
The first concerns the tuning of the algorithm's hyperparameters, \ie the exponents $b$ and $c$.
Even though the conditions stated in \cref{thm:regret} may appear overly precise, we should note that agents have considerably more leeway at their disposal.
Specifically, as part of the proof, we show that any choice of the exponents $b$ and $c$ satisfying \eqref{eq:params} also leads to no regret \textendash\ albeit possibly at a worse rate.
This is particularly important for the interplay between no regret and convergence to \acl{NE} because it shows that the two guarantees are fairly well aligned as long as \eqref{eq:GOLD} is the class of no-regret policies under consideration.


\begin{figure}[t]
\begin{center}
\small

\begin{tikzpicture}
[scale=0.5,
nodestyle/.style={circle,draw=black,fill=gray!10, inner sep=2pt},
edgestyle/.style={-},
>=stealth]

\small

\def\dx{1}
\def\dy{.6}

\draw[thick,->] (0,-.5) -- (0,9.5) node[anchor=north east] {b};
\draw[thick,->](-.5,0) -- (9.5,0) node[anchor=north west] {c};

\fill[DarkGreen!40!white] (8,0) --(8,4)--(6,2);

\draw[thick,dashed,RoyalBlue] (9,-1) -- (-1,9);
\draw[thick,RoyalBlue] (8,-1) -- (8,9) ;
\draw[thick,dashed,RoyalBlue] (3,.-1) -- (9,5);

\draw[thick,Crimson] (7,-2) -- (9, 2);
\coordinate (A0) at (9.75, 2.5);
\node (A0) at (A0) [text = Crimson] {$\alpha = 1$};
\draw[thick,Crimson] (5,-2) -- (9, 6);
\coordinate (A1) at (9.75, 6.5);
\node (A1) at (A1) [text = Crimson] {$\alpha =  \frac{1}{2}$};
\draw[thick,Crimson] (4,-2) -- (9, 8);
\coordinate (A2) at (9.75, 8.5);
\node (A2) at (A2) [text = Crimson] {$\alpha =  \frac{1}{4}$};
\draw[thick,Crimson] (3,-2) -- (9,10);
\coordinate (A3) at (9.75, 10.5);
\node (A3) at (A3) [text = Crimson] {$\alpha = 0$};

\draw[step=1cm,dash dot,gray,very thin] (-.2,-.2) grid (8.7,8.7);

\coordinate (A4) at (6.5, 7);
\node (A4) at (A4) [text = Crimson,rotate=63.4349488] {$b \leq 2c - 1 - \alpha$};

\draw (6,0.2) -- (6,-0.2) node[anchor=north] {$\frac{3}{4}$};
\draw ( 8,0.2) -- (8,-0.2) node[anchor=north west] {$1$};
\draw (0.2,8) -- (-0.2,8) node[anchor=east] {$1$};

\end{tikzpicture}
\end{center}
\caption{The allowable region (green shaded areas) of possible values of the sampling radius and step-size exponents $b$ and $c$ for various values of the groth exponent $\alpha$ of the encountered delays.
The dashed blue lines corresponding to the last two terms in \eqref{eq:params} indicate hard boundaries leading to logarithmic terms in the regret instead of constants.}
\label{fig:abc}
\end{figure}
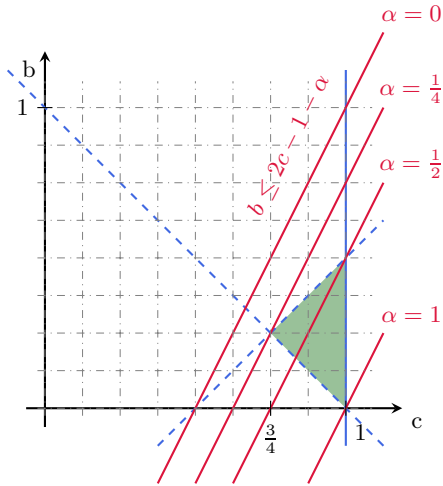


We should also note here that
the $\nRuns^{3/4}$ term is the standad regret bound that one obtains in the bandit online convex optimization framework.
On the other hand, the term $\nRuns^{2/3+\alpha/3}$ describes the advent of the delays which, combined with the bias of the \ac{SPSA} gradient estimator, contribute a significant amount of regret over time (recall in particular that $\delay_{\run}$ is a priori unbounded).
This is of particular importance to applications to online advertising where delays can often become arbitrarily large.

For concreteness, we also plot in \cref{fig:abc} the region of allowed step-size and sampling radius exponents.
This plot reveals the interesting property that, if the feedback delays do not grow too large over time \textendash\ specifically, if $\delay_{\run} = o(\run^{1/4})$ \textendash\ then they have no impact on the allowable choices of $b$ and $c$.
This is also reflected in the regret bound \eqref{eq:reg-bound} where, for $\alpha = 1/4$, the regret-specific term becomes $\nRuns^{3/4}$ as well;
in particular, in the \emph{constant regret} case $\delay_{\run} = \bigoh(1)$, the delays are invisible in \eqref{eq:reg-bound}.
These considerations illustrate the impact of each source of feedback scarcity (bandit vs. delays) on the performance of \eqref{eq:GOLD} and provides a clear insight on the different mechanisms affecting the algorithm's regret and convergence guarantees.

\subsection{Analysis and sketch of proof}

The rest of this section is devoted to a high-level sketch of the proof of \cref{thm:regret,thm:Nash}.
We begin by decomposing $\est{\payv}_{\run}$ as a noisy estimate of $ \payv(\state_{\head_{\run}})$ into the following elements:
\begin{equation}
\label{eq:signal}
\est{\payv}_{\run}
	= \payv(\state_{\head_{\run}}) +\noise_{\head_{\run}+1} + \bias_{\head_{\run}}.
\end{equation}
The various terms in \eqref{eq:signal} above are defined as follows:
\begin{enumerate}
\item
First, we set
\begin{equation}
\label{eq:noise}
\noise_{\head_{\run}+1}
	= \est{\payv}_{\run} -  \exof{\est{\payv}_{\run}\given \filter_{\run}}
\end{equation}
where the filtration $\filter_{\run}$ contains all the random variables that have been realized at the beginning of the $\run$-th iteration of the algorithm;
more precisely, we let
\begin{equation}
\label{eq:filter}
\filter_{\run}
	= \sigma(
		\varnothing,
		\state_{\start},
		\dotsc,
		\est\pay_{\head_{\run-1}},\unitvar_{\run-1},\state_{\run})
\end{equation}
with the convention $\est\pay_{\infty} = 0$, $\unitvar_{\infty} = 0$ if $\head_{\run} = \infty$.
We note for posterity that $\noise_{\head_{\run}}$ is a martingale difference sequence relative to $\filter_{\run}$, \ie $\exof{\noise_{\head_{\run}+1} \given \filter_{\run}} = 0$.

\item
Second, we let
\begin{equation}
\bias_{\head_{\run}}
	= \exof{\est\payv_{\run} \given \filter_{\run}} - \payv(\state_{\head_{\run}})
\end{equation}
denote the systematic error of the estimator $\est\payv_{\run}$ relative to the gradient of the dequeued state $\state_{\head_{\run}}$ (\ie the error remaining after any zero-sum component has been averaged out).
In contrast to $\noise$, this term is not zero-mean;
instead, as we discuss in the appendix, the \ac{SPSA} gradient estimation process that we employ induces a bias of order $\norm{\bias_{\head_{\run}}} = \bigoh(\mix_{\head_{\run}})$.
This bias term grows smaller with $\run$ but its variance increases, leading to a bias-variance trade-off in our setting.
\end{enumerate}

With all this in hand, the workhorse of our calculations is the distance of the sequence $\state_{\run}$ to a given ``benchmark'' action $\base\in\points$ (the best fixed action in hindsight, or the game's equilibrium, depending on the context).
Specifically, letting
\begin{equation}
\label{eq:energy}
\breg_{\run}
	= \frac{1}{2} \norm{\state_{\run} - \base}^{2}
\end{equation}
we have the following template inequality:

\begin{restatable}{lemma}{template}
\label{lem:template} 
If \eqref{eq:GOLD} is run with assumptions as above, then, for all $\base\in\points$, we have
\begin{subequations}
\label{eq:template}
\begin{align}
\breg_{\run+1}
	\leq \breg_{\run}
	&\label{eq:template-drift}
		+ \step_{\run} \braket{\payv(\state_{\head_{\run}})}{\state_{\run} - \base}
	\\
	&\label{eq:template-noise}
		+ \step_{\run} \braket{\noise_{\head_{\run}+1}}{\state_{\run} - \base}
	\\
	&\label{eq:template-bias}
		+ \step_{\run} \braket{\bias_{\head_{\run}}}{\state_{\run} - \base}
	\\
	&\label{eq:template-error}
		+ \tfrac{1}{2} \step_{\run}^{2} \norm{\hat\payv_{\run}}^{2}
\end{align}
\end{subequations}
\end{restatable}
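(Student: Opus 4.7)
The inequality is a one-step energy estimate for a projected gradient scheme, so the plan is to (i) expand the update rule, (ii) use nonexpansiveness of $\Eucl$ to drop the projection, and (iii) insert the decomposition \eqref{eq:signal} to split the resulting inner product into its three pieces. Since $\base\in\points$ we have $\Eucl(\base)=\base$, and nonexpansiveness of the Euclidean projection onto the convex set $\points$ gives
\[
\norm{\state_{\run+1} - \base}^{2}
 = \norm{\Eucl(\state_{\run} + \step_{\run}\est\payv_{\run}) - \Eucl(\base)}^{2}
 \leq \norm{\state_{\run} + \step_{\run}\est\payv_{\run} - \base}^{2}.
\]
Expanding the right-hand side and halving the result produces
\[
\breg_{\run+1}
 \leq \breg_{\run}
 + \step_{\run}\braket{\est\payv_{\run}}{\state_{\run} - \base}
 + \tfrac{1}{2}\step_{\run}^{2}\norm{\est\payv_{\run}}^{2},
\]
which already has the shape of \eqref{eq:template} up to the splitting of the middle term.

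Next, I would plug in $\est\payv_{\run} = \payv(\state_{\head_{\run}}) + \noise_{\head_{\run}+1} + \bias_{\head_{\run}}$ from \eqref{eq:signal} and use bilinearity of the inner product: the three resulting summands are exactly the drift \eqref{eq:template-drift}, noise \eqref{eq:template-noise}, and bias \eqref{eq:template-bias} terms, while the remaining quadratic piece is precisely \eqref{eq:template-error}. This assembles the full template.

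The only point that requires any attention at all is the degenerate case $\pool_{\run} = \varnothing$, i.e.\ $\head_{\run} = \infty$: under the conventions $\est\pay_{\infty} = \unitvar_{\infty} = 0$ declared in \cref{alg:GOLD} we have $\est\payv_{\run} = 0$, and setting $\payv(\state_{\infty}) = \noise_{\infty+1} = \bias_{\infty} = 0$ makes both the update step and the signal-noise-bias decomposition collapse to $\state_{\run+1} = \state_{\run}$, so \eqref{eq:template} reduces trivially to $\breg_{\run+1} = \breg_{\run}$. Beyond that, no real obstacle will surface in the proof --- the lemma is scaffolding, and its value lies in the downstream analysis, where each of the four summands must be summed and controlled separately: the drift via concavity (or \eqref{eq:DSC} in the game-theoretic setting of \cref{thm:Nash}); the noise via a Doob/martingale argument exploiting $\exof{\noise_{\head_{\run}+1} \given \filter_{\run}} = 0$; the bias via the $\bigoh(\mix_{\head_{\run}})$ estimate inherent to the \ac{SPSA} scheme; and the quadratic term via the crude bound $\norm{\est\payv_{\run}} = \bigoh(1/\mix_{\head_{\run}})$. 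The delicate balancing of $\step_{\run}$ against $\mix_{\run}$ prescribed in \cref{thm:regret,thm:Nash} is precisely what will make the telescoped version of \eqref{eq:template} usable.
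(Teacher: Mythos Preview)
Your proposal is correct and follows essentially the same route as the paper: nonexpansiveness of the Euclidean projection (using $\Eucl(\base)=\base$), expansion of the square, and insertion of the decomposition \eqref{eq:signal}, with the empty-pool case handled by convention. The only cosmetic difference is that the paper inserts the decomposition inside the chain of inequalities rather than as a separate step, but the logic is identical.
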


This lemma follows from the decomposition \eqref{eq:signal}, the nonexpansivity of the projection mapping, and the regularity assumption \eqref{asm:pay} which allows us to control the terms \eqref{eq:template-drift} and \eqref{eq:template-bias} above;
to streamline our discussion, we defer the details to the paper's supplement.
Moving forward, with this estimate at our disposal, the analysis branches for \cref{thm:regret,thm:Nash} as indicated below.

\para{Regret analysis}
To bound the agent's regret, we need to isolate the scalar product in \eqref{eq:template-drift} and telescope through $\run=\running,\nRuns$ after dividing by the step-size $\step_{\run}$.
Deferring the ensuing lengthy calculations to the appendix, we ultimately obtain a bound of the form
\begin{equation}
\label{eq:reg-bound-sum}
\overline\reg(\nRuns)
	= \bigoh\parens*{
		\frac{1}{\step_{\nRuns}}
			\sum_{\run=\start}^{\nRuns}
			\parens*{
				\step_{\run} \sum_{\runalt=\head_{\run}}^{\run-1} \frac{\step_{\runalt}}{\mix_{\runalt}}
				+ \step_{\run} \mix_{\head_{\run}}
				+ \frac{\step_\run^2}{\mix_{\head_{\run}}^2}
			} 
		} 
\end{equation}
As a result, to proceed, we need to provide a specific bound for each of the above summands.
The difficulty here is the mixing of different quantities at different time-stamps, \eg as in the product term $\step_{\run} \mix_{\head_{\run}}$.
Bounding these terms requires a delicate analysis of the delay terms in order to estimate the maximum distance between $\run$ and $\head_{\run}$.
We will return to this point below;
for now, with some hindsight, we only stress that the terms in \eqref{eq:reg-bound-sum} correspond on a one-to-one basis with the conditions \eqref{eq:params} for the parameters of \eqref{eq:GOLD}.

\para{Game-theoretic analysis}

The game-theoretic analysis is significantly more involved and relies on a two-pronged approach:
\begin{enumerate}
\item
We first employ a version of the Robbins\textendash Siegmund theorem to show that the random variable $\breg_{\run} = (1/2) \norm{\state_{\run} - \eq}^{2}$ converges pointwise as $\run\to\infty$ to a random variable $\breg_{\infty}$ that is bounded in expectation (here $\eq$ denotes the game's unique equilibrium).

\item
Subsequently, we use a series of probabilistic arguments (more precisely, a law of large numbers for martingale difference sequences and Doob's submartingale convergence theorem) to show that \eqref{eq:GOLD} admits a (possibly random) subsequence $\state_{\run_{\runalt}}$ converging to $\eq$.
\end{enumerate}
Once these two distinct elements have been obtained, we can readily deduce that $\state_{\run} \to \eq$ with probability $1$ as $\run\to\infty$.
Hence, given that $\norm{\state_{\run} - \est\state_{\run}} = \bigoh(\mix_{\run})$ and $\lim_{\run} \mix_{\run} = 0$, our claim would follow.

However, applying the probabilistic arguments outlined above requires in turn a series of summability conditions.
Referring to the paper's supplement for the details, these requirements boil down to showing that the sequences
\begin{equation}
\label{eq:3seqs}
A_{\run}
	= \step_{\run} \sum_{\runalt=\head_{\run}}^{\run-1} \frac{\step_{\runalt}}{\mix_{\runalt}},
	\quad
B_{\run}
	= \step_{\run} \mix_{\head_{\run}},
	\quad
\textrm{and}
	\quad
C_{\run}
	= \frac{\step_\run^2}{\mix_{\head_{\run}}^2},
\end{equation}
are all summable.
Importantly, each of these three sums has a clear and concise interpretation in our learning context:
\begin{enumerate}
\item
The first term ($A_{\run}$) is the cumulative error induced by using outdated information.
\item
The second term ($B_{\run}$) is the error propagated from the bias of the \ac{SPSA} estimator.
\item
Finally, the third term ($C_{\run}$) corresponds to the variance (or, rather, the mean square) of the \ac{SPSA} estimator.
\end{enumerate}
As a result, as long as these terms are all summable, their impact on the learning process should be relatively small (if not outright negligible).

Comparing the above term-by-term to \eqref{eq:reg-bound-sum} is where the game-theoretic analysis rejoins the regret analysis.
As we said above, this requires a careful treatment of the delay process, which we outline below.

\para{Delay analysis}

A key difficulty in bounding the sums in \eqref{eq:reg-bound-sum} is that the first term ($A_{\run}$ in \eqref{eq:3seqs} is a sum of $\run-\head_{\run}$ terms, so it can grow quite rapidly in principle.
However, our pooling strategy guarantees that $\run-\head_{\run}$ cannot grow faster than the delay (which is sublinear by assumption).
This observation (detailed in the supplement) guarantees the convergence of the sum.
A further hidden feature of \eqref{eq:template} is in the noise term $\noise_{\run}$:
in the case of batching or reweighted strategies (\eg as in \citealp{ZMBG+17-NIPS}), this term incorporates a sum of terms arriving from different stages of the process, making it very difficult (if not impossible) to control.
By contrast, the pooling strategy that defines the \ac{GOLD} policy allows us to treat this as an additional ``noise'' variable;
we achieve this by carefully choosing the step-size and sampling radius parameters based on the following lemma:

\begin{restatable}{lemma}{threesums}
\label{lem:3sums}
Suppose that \eqref{eq:GOLD} is run with step-size and sampling radius parameters of the form $\step_{\run} \propto \step/\run^{c}$ and $\mix_{\run} \propto \mix/\run^{b}$, with $b,c>0$.
Then:
\begin{enumerate}
\item
If $2c-b \geq 1+\alpha$, then $\sum_{\run=\start}^{\nRuns} A_{t} = \bigoh(\log\nRuns)$;
in addition, if the inequality is strict, $A_{\run}$ is summable.
\item
If $c+b\geq 1$, then $\sum_{\run=\start}^{\nRuns} B_{t} = \bigoh(\log\nRuns)$;
in addition, if the inequality is strict, $B_{\run}$ is summable.
\item
If $2c - 2b \geq 1$, then $\sum_{\run=\start}^{\nRuns} C_{t} = \bigoh(\log\nRuns)$;
in addition, if the inequality is strict, $C_{\run}$ is summable.
\end{enumerate}
\end{restatable}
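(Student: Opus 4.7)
The plan is to reduce all three claims to power-counting and then invoke the elementary fact that $\sum_{\run=1}^{\nRuns}\run^{-p}$ is $\Theta(\log\nRuns)$ for $p=1$ and $\bigoh(1)$ for $p>1$. The key input is a queue-dynamics estimate derived from the pool bound $|\unused_\run| \leq \max_{\runalt\leq\run}\delay_\runalt$ recorded in the appendix: combined with the priority-queue structure of the pool, this yields the lag bound $\run - \head_\run = \bigoh(\max_{\runalt\leq\run}\delay_\runalt) = o(\run^\alpha)$ under \cref{asm:delay}. In particular $\head_\run/\run \to 1$, so for all sufficiently large $\run$ one has $\head_\run \geq \run/2$ and hence $\mix_{\head_\run} \leq 2^b\,\mix_\run$.

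With this in hand, the bounds on $B_\run$ and $C_\run$ are immediate. Writing $\step_\run = \step/\run^c$ and $\mix_\run = \mix/\run^b$,
\begin{equation*}
B_\run \;\leq\; 2^b\,\step\mix\,\run^{-(b+c)}
\qquad\text{and}\qquad
C_\run \;=\; \frac{\step^2}{\mix^2}\,\frac{\head_\run^{\,2b}}{\run^{2c}}
\;\leq\; \frac{\step^2}{\mix^2}\,\run^{-(2c-2b)},
\end{equation*}
the latter using $\head_\run \leq \run$. Summing yields the thresholds $b+c \geq 1$ and $2c - 2b \geq 1$ of claims 2 and 3, with $\bigoh(\log\nRuns)$ at the boundary and outright summability above it.

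For $A_\run$, expand
\begin{equation*}
A_\run \;=\; \frac{\step^2}{\mix}\,\run^{-c}\sum_{\runalt=\head_\run}^{\run-1}\runalt^{-(c-b)}.
\end{equation*}
The summand is monotone in $\runalt$, so the inner sum is bounded by its number of terms times the extremal value of the summand; using $\run - \head_\run = \bigoh(\run^\alpha)$ together with $\head_\run \geq \run/2$ (for either sign of $c - b$) one gets $\sum_{\runalt=\head_\run}^{\run-1}\runalt^{-(c-b)} = \bigoh(\run^{\alpha+b-c})$, and therefore $A_\run = \bigoh(\run^{\alpha + b - 2c})$. The $\sum\run^{-p}$ test then delivers the $\bigoh(\log\nRuns)$ estimate at the boundary $2c - b = 1 + \alpha$ and summability above it.

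The only genuinely delicate point in the whole argument is the queue-dynamics estimate on $\run - \head_\run$ at the start, which prevents the lag between the current round and the dequeued index from spoiling the power-counting; once it is in hand, the three conditions in \cref{lem:3sums} match one-to-one the summability thresholds for the three powers $\run^{-(b+c)}$, $\run^{-(2c-2b)}$, and $\run^{-(2c-b-\alpha)}$ above.
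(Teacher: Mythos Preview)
Your argument is correct and mirrors the paper's own proof: both invoke the lag estimate $\run - \head_{\run} = o(\run^{\alpha})$ (equivalently $\head_{\run} = \Theta(\run)$) from \cref{cor:lag}, and then reduce each of $A_{\run}$, $B_{\run}$, $C_{\run}$ to a single power $\run^{-p}$ before applying the $p$-series test. Your bound on the inner sum in $A_{\run}$ via ``number of terms times extremal summand'' is if anything a shade cleaner than the paper's intermediate inequality, but the two arguments are otherwise identical.
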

Proving this lemma requires a series of intermediate results that we defer to the paper's supplement.

\section{Concluding remarks}
\label{sec:conclusion}

Our aim in this paper was to examine the properties of bandit online learning in games with continuous action spaces and a delayed reward structure (with a priori unbounded delays).
The proposed \ac{GOLD} policy is the first in the literature to simultaneously achieve no regret and convergence to \acl{NE} with delayed rewards \emph{and} bandit feedback.
From a regret perspective, it matches the standard $\bigoh(\nRuns^{3/4})$ bound of \citet{FKM05} if the delay process is tame (specifically, if $\delay_{\run}$ grows no faster than $o(\run^{1/4})$);
in addition, from game-theoretic standpoint, it converges to equilibrium with probability $1$ in all games satisfying Rosen's \ac{DSC} condition.

One important direction for future research concerns the case of anonymous \textendash\ \ie not time-stamped \textendash\ rewards.
This complicates the matters considerably because it is no longer possible to match a received reward to an action;
as a result, the \ac{GOLD} policy would have to be redesigned from the ground up in this context.
Another important avenue is that
the kernel-based estimation techniques of \citet{BE16,BE17} achieve a faster $\bigoh(\nRuns^{1/2})$ regret minimization rate with bandit feedback;
whether this is still achievable with a delayed reward structure, and whether this can also lead to fast(er) convergence to \acl{NE} is another direction for future research.

\appendix
\numberwithin{equation}{section}		
\numberwithin{lemma}{section}		
\numberwithin{proposition}{section}		
\numberwithin{theorem}{section}		

\section{Auxiliary results}
\label{app:aux}

In this appendix, we collect some basic results for the \ac{SPSA} gradient estimator and the gradient update step in \eqref{eq:GOLD}.
We begin by establishing the template inequality of \cref{lem:template} which, for convenience, we restate below:

\template*

\begin{proof}
We begin by recalling the decomposition of $\est\payv_{\run}$ as
\begin{equation}
\tag{\ref*{eq:signal}}
\est{\payv}_{\run}
	= \payv(\state_{\head_{\run}}) +\noise_{\head_{\run}+1} + \bias_{\head_{\run}}.
\end{equation}
where:
\begin{enumerate}
\item
the noise process $\noise_{\head_{\run}+1} = \est\payv_{\run} -  \exof{\est\payv_{\run}\given \filter_{\run}}$ is a zero-mean process when conditioned on the filtration $\filter_{\run} = \sigma(\varnothing,\state_{\start},\dotsc,\est\pay_{\head_{\run-1}},\unitvar_{\run-1},\state_{\run})$ that contains all random variables that have been realized at the beginning of the $\run$-th iteration of the algorithm.
\item
$\bias_{\head_{\run}} = \exof{\est\payv_{\run} \given \filter_{\run}} - \payv(\state_{\head_{\run}})$ denotes the systematic (non-zero-mean) error of the estimator $\est\payv_{\run}$ relative to the gradient of the dequeued state $\state_{\run}$.
\end{enumerate}
We note for posterity that $\noise_{\head_{\run}}$ is a martingale difference sequence relative to $\filter_{\run}$, \ie $\exof{\noise_{\head_{\run}+1} \given \filter_{\run}} = 0$.
We also note here that, when the pool is empty, there is no update so $\state_{\run+1} = \state_{\run}$.
Then, for any $\base\in\points$ and for all $\run=\running\nRuns$ for which an update occurs, we have
\begin{align}
\label{eq:start}
\norm{\state_{\run+1} - \base}^{2}
	&= \norm{\Eucl(\state_{\run} + \step_{\run}\est\payv_{\run}) - \base}^{2}
	\notag\\
	&= \norm{\Eucl(\state_{\run} + \step_{\run}\est\payv_{\run}) - \Eucl(\base)}^{2}
	\notag\\
	&\leq \norm{\state_{\run} + \step_{\run}\est\payv_{\run} - \base}^{2}
	\notag\\
	&\leq \norm{\state_{\run} - \base}^{2}
		+ 2\step_{\run} \braket{\est\payv_{\run}}{\state_{\run} - \base}
		+ \step_{\run}^{2} \norm{\est\payv_{\run}}^{2}
	\notag\\
	&= \norm{\state_{\run} - \base}^{2}
		+ 2\step_{\run} \braket{\payv(\state_{\head_{\run}}) +\noise_{\head_{\run}+1} + \bias_{\head_{\run}} }{\state_{\run} - \base}
		+ \step_{\run}^{2} \norm{\est\payv_{\run}}^{2}
		\notag\\
	&= \norm{\state_{\run} - \base}^{2}
		+ 2\step_{\run}\braket{\payv(\state_{\run})}{\state_{\run} - \base}
	\notag\\
	&\hphantom{= \norm{\state_{\run} - \base}^{2}\;}
		+ 2\step_{\run} \braket{\noise_{\head_{\run}+1}}{\state_{\run} - \base}
		+ 2\step_{\run} \braket{\bias_{\head_{\run}}}{\state_{\run} - \base}
		+ \step_{\run}^{2} \norm{\hat\payv_{\run}}^{2}.
\end{align}
Our claim then follows by recalling that $\breg_{\run} = (1/2) \norm{\state_{\run} - \base}^{2}$.
\end{proof}

Coupled with \cref{lem:template}, the decomposition \eqref{eq:signal} will allow us to control the distance to a chosen bencmark action $\base$ by properly bounding each of the summands of \eqref{eq:template}.
To that end, we provide below a series of estimates for each of these terms:

\begin{lemma}
\label{lem:snoise}
Let $\snoise_{\head_{\run}+1} = \braket{\noise_{\head_{\run}+1}}{\state_{\run} - \base}$.
Then:
\begin{equation}
\exof{\snoise_{\head_{\run}+1}}
	= 0
	\quad
	\text{for all $\run=\running$}
\end{equation}
\end{lemma}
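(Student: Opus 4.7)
The plan is to exploit the martingale-difference structure of $\noise_{\head_{\run}+1}$ together with the $\filter_{\run}$-measurability of the vector $\state_{\run} - \base$. Conceptually, this is a routine ``tower property'' computation: conditional on $\filter_{\run}$, the inner product is linear in the noise, and the noise has zero conditional mean by construction.

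First I would verify the measurability ingredients. By inspection of the filtration
\[
\filter_{\run}
    = \sigma(\varnothing,\state_{\start},\dotsc,\est\pay_{\head_{\run-1}},\unitvar_{\run-1},\state_{\run}),
\]
the pivot $\state_{\run}$ is $\filter_{\run}$-measurable by construction; the benchmark $\base\in\points$ is either deterministic (as in the regret analysis, where it is an arbitrary fixed point of comparison) or, in the game-theoretic application, the unique equilibrium $\eq$ which is a constant of the game. In either case, the increment $\state_{\run} - \base$ is $\filter_{\run}$-measurable.

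Next I would condition on $\filter_{\run}$ and pull the inner product out. Since $\noise_{\head_{\run}+1} = \est\payv_{\run} - \exof{\est\payv_{\run}\given \filter_{\run}}$ is a martingale difference sequence, we have $\exof{\noise_{\head_{\run}+1}\given \filter_{\run}} = 0$, and linearity of conditional expectation yields
\[
\exof{\snoise_{\head_{\run}+1} \given \filter_{\run}}
    = \braket{\exof{\noise_{\head_{\run}+1} \given \filter_{\run}}}{\state_{\run} - \base}
    = 0.
\]
The result then follows by taking total expectations via the tower property: $\exof{\snoise_{\head_{\run}+1}} = \exof{\exof{\snoise_{\head_{\run}+1}\given\filter_{\run}}} = 0$.

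There is no real obstacle here; the only point to be careful about is the edge case $\head_{\run} = \infty$ (empty pool), but under the stated conventions $\est\pay_{\infty}=0$ and $\unitvar_{\infty}=0$ give $\est\payv_{\run} = 0$ and hence $\noise_{\head_{\run}+1} = 0$ deterministically, so the identity holds trivially. Thus the lemma reduces to the standard observation that projecting a zero-mean conditional increment against a predictable vector preserves the zero-mean property.
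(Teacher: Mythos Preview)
Your proposal is correct and follows essentially the same approach as the paper: both arguments invoke the tower property, use the $\filter_{\run}$-measurability of $\state_{\run}-\base$ to pull it outside the conditional expectation, and conclude from $\exof{\noise_{\head_{\run}+1}\given\filter_{\run}}=0$. Your treatment is slightly more explicit about the measurability check and the empty-pool edge case, but the core computation is identical.
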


\begin{proof}
By the law of total expectation, we have:
\begin{align}
\exof{\snoise_{\head_{\run}+1}}
	&= \exof{ \braket{\noise_{\head_{\run}+1}}{\state_{\run} - \base} }\
	\notag\\
	&= \exof{ \exof{\braket{\noise_{\head_{\run}+1}}{\state_{\run} - \base} \given \filter_{\run}} }
	= \exof{ \braket{\exof{\noise_{\head_{\run}+1} \given \filter_{\run}}}{\state_{\run} - \base} }
	= 0
\end{align}
where the last step follows from the fact that $\exof{\noise_{\head_{\run}+1} \given \filter_{\run}} = \exof{\est\payv_{\run} - \exof{\est\payv_{\run} \given \filter_{\run}} \given \filter_{\run}} = 0$.
\end{proof}

\begin{lemma}
\label{lem:sbias}
Let $\sbias_{\head_{\run}} = \braket{\bias_{\head_{\run}}}{\state_{\run} - \base}$.
Then, there exists a positive constant $\bbound>0$ such that:
\begin{equation}
\label{eq:sbias}
\abs{\sbias_{\head_{\run}}}
	\leq \bbound \mix_{\head_{\run}}
	\quad
	\text{for all $\run=\running$}
\end{equation}
\end{lemma}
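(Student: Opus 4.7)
The plan is a two-step reduction. First, I would invoke Cauchy--Schwarz and use compactness of $\points$ to get
\[
|\sbias_{\head_{\run}}| \leq \|\bias_{\head_{\run}}\| \cdot \|\state_{\run} - \base\| \leq \diam(\points) \cdot \|\bias_{\head_{\run}}\|,
\]
reducing the lemma to showing $\|\bias_{\head_{\run}}\| = \bigoh(\mix_{\head_{\run}})$ with a constant independent of $\run$. The case when the pool is empty ($\head_{\run}=\infty$) is trivial since both $\est\payv_{\run}$ and $\payv(\state_{\head_{\run}})$ vanish by convention, so I can focus on $\head_{\run}<\infty$.

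Second, I would derive the norm bound on $\bias_{\head_{\run}}$ from the standard one-point \ac{SPSA} identity, adapted to the feasibility-corrected sampling. The key observation is that the feasibility tilt can be absorbed into the centering point: writing $\base_0$ for the safety-ball center (to avoid clashing with the benchmark $\base$ of the template lemma),
\[
\est\state_{\head_{\run}} = y_{\head_{\run}} + \mix_{\head_{\run}} \unitvar_{\head_{\run}},
\qquad
y_{\head_{\run}} \coloneqq (1 - \mix_{\head_{\run}}/\radius)\,\state_{\head_{\run}} + (\mix_{\head_{\run}}/\radius)\,\base_0,
\]
so that the query is a pure spherical perturbation around $y_{\head_{\run}}$. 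An application of Stokes' theorem (as in \citet{FKM05}) then yields the identity $\exof{\est\payv_{\run} \given \filter_{\run}} = \grad \hat\pay_{\head_{\run},\mix_{\head_{\run}}}(y_{\head_{\run}})$, where $\hat\pay_{\head_{\run},\mix}(\cdot) = \exof{\pay_{\head_{\run}}(\cdot + \mix \pertvar)}$ is the ball-smoothed payoff.

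From here, a triangle inequality splits the bias into (i) a smoothing error $\|\grad\hat\pay_{\head_{\run},\mix_{\head_{\run}}}(y_{\head_{\run}}) - \payv(y_{\head_{\run}})\|$, which is $\bigoh(\smooth\,\mix_{\head_{\run}})$ by Lipschitz smoothness of $\payv$ (\cref{asm:pay}), since the smoothed gradient is an average of $\payv$ over a $\mix_{\head_{\run}}$-ball; and (ii) a tilt error $\|\payv(y_{\head_{\run}}) - \payv(\state_{\head_{\run}})\| \leq \smooth\,(\mix_{\head_{\run}}/\radius)\|\state_{\head_{\run}} - \base_0\| \leq (\smooth\,\diam(\points)/\radius)\,\mix_{\head_{\run}}$, using Lipschitz smoothness and compactness once more. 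Collecting the constants into $\bbound$ yields the claim.

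The main (mild) obstacle is verifying that the feasibility-adjusted estimator really does inherit the clean \ac{SPSA} identity; once the tilt is absorbed into $y_{\head_{\run}}$ and feasibility of $\est\state_{\head_{\run}}$ is secured by the safety-ball condition $\ball_{\radius}(\base_0) \subseteq \points$ (valid as soon as $\mix_{\head_{\run}} \leq \radius$, which holds eventually), the rest of the argument is routine smoothness bookkeeping.
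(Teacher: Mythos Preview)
Your proposal is correct and follows the same core approach as the paper: reduce via Cauchy--Schwarz and $\diam(\points)$, then use the Stokes/\ac{SPSA} identity together with the $\smooth$-Lipschitz smoothness of $\payv$ from \cref{asm:pay} to show the conditional-mean estimator is $\bigoh(\mix_{\head_{\run}})$-close to the true gradient.

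There are two differences worth flagging. First, you explicitly absorb the feasibility correction $\pertvar_{\head_{\run}} = \unitvar_{\head_{\run}} - \radius^{-1}(\state_{\head_{\run}} - \base_0)$ into a shifted center $y_{\head_{\run}}$ and then pay a separate ``tilt'' term $\|\payv(y_{\head_{\run}}) - \payv(\state_{\head_{\run}})\|$; the paper's proof silently drops this adjustment and writes the sampled action as $\state_{\head_{\run}} + \mix_{\head_{\run}}\unitvec$, so your treatment is actually more careful here. Second, the paper carries out the computation in the full multi-agent setting: because $\est\pay_{\head_{\run}} = \pay^{\play}(\est\state_{\head_{\run}}^{\play};\est\state_{\head_{\run}}^{-\play})$ also depends on the \emph{other} players' independent perturbations $\unitvar_{\head_{\run}}^{\playalt}$, the conditional expectation averages over the product of spheres $\prod_{\playalt} \sphere^{\playalt}$, and Stokes is applied only in the $\play$-th coordinate. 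This yields the smoothed payoff \eqref{eq:pay-smooth} and an extra $\sqrt{\nPlayers}$ in the constant. Your single-agent argument, treating $\pay_{\head_{\run}}$ as a fixed function, is fine for the unilateral regret analysis but would need this product-sphere extension for the game-theoretic part; the extension is mechanical once you see it.
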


\begin{proof}
In this result, the role of the choices of other players plays an important role, so we will momentarily reinstate the player superscript.
However, to keep the notation manageable, we will tacitly assume that all players trigger the same delay process so $\head_{\run}^{\play} = \head_{\run}^{\playalt}$ for all $\play,\playalt\in\players$;
the proof is exactly the same in the general case.

To begin, by the definition \eqref{eq:oracle} of $\est\payv_{\run}$ and the independence of the sampling directions $\unitvar_{\runalt}^{\play}$ across players $\play\in\players$ and stages $\runalt=\running\run$, we have
\begin{flalign}
\exof{\est\payv_{\run}^{\play} \given \filter_{\run}}
	&= \frac{\vdim^{\play}/\mix_{\head_{\run}}}{\prod_{\playalt} \vol(\sphere^{\playalt})}
	\int_{\sphere^{1}} \dotsi \int_{\sphere^{\nPlayers}}
		\pay^{\play}(
			\state_{\head_{\run}}^{1} + \mix_{\head_{\run}}\unitvec^{1},
			\dotsc,
			\state_{\head_{\run}}^{\nPlayers} + \mix_{\head_{\run}}\unitvec^{\nPlayers})
		\unitvec^{\play}\,
		\dd \unitvec^{1}\dotsm \dd \unitvec^{\nPlayers}
	\notag\\
	&= \frac{\vdim^{\play}/\mix_{\head_{\run}}}{\prod_{\playalt} \vol(\mix_{\head_{\run}}\sphere^{\playalt})}
	\int_{\mix_{\head_{\run}}\sphere^{1}} \dotsi \int_{\mix_{\head_{\run}}\sphere^{\nPlayers}}
		\pay^{\play}(\state_{\head_{\run}}^{1} + \unitvec^{1},\dotsc,\state_{\head_{\run}}^{\nPlayers} + \unitvec^{\nPlayers})
		\frac{\unitvec^{\play}}{\norm{\unitvec^{\play}}}\,
		\dd \unitvec^{1}\dotsm \dd \unitvec^{\nPlayers}
	\notag\\
	&= \frac{\vdim^{\play}/\mix_{\head_{\run}}}{\prod_{\playalt} \vol(\mix_{\head_{\run}}\sphere^{\playalt})}
	\int_{\mix_{\head_{\run}}\sphere^{\play}} \int_{\prod_{\playalt\neq\play} \mix_{\head_{\run}}\sphere^{\playalt}}
		\pay^{\play}(\state_{\head_{\run}}^{\play} + \unitvec^{\play};\state_{\head_{\run}}^{-\play} + \unitvec^{-\play})
		\frac{\unitvec^{\play}}{\norm{\unitvec^{\play}}}\,
		\dd \unitvec^{\play} \dd \unitvec^{-\play}
	\notag\\
	&= \frac{\vdim^{\play}/\mix_{\head_{\run}}}{\prod_{\playalt} \vol(\mix_{\head_{\run}}\sphere^{\playalt})}
	\int_{\mix_{\head_{\run}}\ball^{\play}} \int_{\prod_{\playalt\neq\play} \mix_{\head_{\run}}\sphere^{\playalt}}
		\nabla_{\play}\pay^{\play}(\state_{\head_{\run}}^{\play} + \pertvec^{\play};\state_{\head_{\run}}^{-\play} + \unitvec^{-\play})
		\dd \pertvec^{\play} \dd \unitvec^{-\play},
\end{flalign}
where, in the last line, we used Stokes' theorem \citep{Lee03} to write
\begin{equation}
\label{eq:Stokes}
\int_{\mix_{\head_{\run}}\sphere^{\play}}
		\pay^{\play}(\state_{\head_{\run}}^{\play} + \unitvec^{\play};\state_{\head_{\run}}^{-\play} + \unitvec^{-\play})
		\frac{\unitvec^{\play}}{\norm{\unitvec^{\play}}}\,
	\dd \unitvec^{\play}
	= \int_{\mix_{\head_{\run}}\ball^{\play}}
	\nabla_{\play} \pay^{\play}(\state_{\head_{\run}}^{\play} + \pertvec^{\play};\state_{\head_{\run}}^{-\play} + \unitvec^{-\play})
	\dd \pertvec^{\play}.
\end{equation}
Since $\vol(\mix_{\head_{\run}} \ball^{\play}) = (\mix_{\head_{\run}}/\vdim^{\play}) \vol(\mix_{\head_{\run}}\sphere^{\play})$, the above yields
\begin{equation}
\exof{\est\payv_{\run}^{\play} \given \filter_{\run}}
	= \nabla_{\play} \pay_{\mix_{\head_{\run}}}^{\play}(\state_{\head_{\run}}^{\play};\state_{\head_{\run}}^{-\play})
\end{equation}
where the ``$\mix$-smoothed'' payoff function $\pay_{\mix}^{\play}$ of player $\play$ is defined as
\begin{equation}
\label{eq:pay-smooth}
\pay_{\mix}^{\play}(\point^{\play};\point^{-\play})
	= \frac{1}{\vol(\mix\ball^{\play}) \prod_{\playalt\neq\play} \vol(\mix\sphere_{\playalt})}
	\int_{\mix\ball^{\play}} \int_{\prod_{\playalt\neq\play} \mix\sphere^{\playalt}}
		\pay^{\play}(\point^{\play} + \pertvec^{\play};\point^{-\play} + \unitvec^{-\play})\,
		\dd\unitvec^{1} \dotsm d\pertvec^{\play} \dotsm d\unitvec^{\nPlayers}
\end{equation}

We now proceed to show that $\max_{\point\in\points} \norm{\nabla_{\play} \pay_{\mix}^{\play}(\point) - \nabla_{\play}\pay^{\play}(\point)} = \bigoh(\mix)$.
Indeed, by \cref{asm:pay}, we have $\norm{\payv^{\play}(\pointalt) - \payv^{\play}(\pointalt)} \leq \smooth \norm{\pointalt - \point}$ for all $\point,\pointalt\in\points$.
Hence, for $\pertvec^{\play}\in\mix\ball^{\play}$ and all $\unitvec^{\playalt} \in \mix\sphere^{\playalt}$, $\playalt\neq\play$, we obtain:
\begin{equation}
\norm{\nabla_{\play} \pay^{\play}(\point^{\play} + \pertvec^{\play};\point^{-\play} + \unitvec^{-\play}) - \nabla_{\play}\pay^{\play}(\point^{\play};\point^{-\play})}
	\leq \smooth \sqrt{\norm{\pertvec^{\play}}^{2} + \sum\nolimits_{\playalt\neq\play} \norm{\unitvec^{\playalt}}^{2}}
	\leq \smooth \mix \sqrt{\nPlayers}.
\end{equation}
Thus, by differentiating under the integral sign in the definition \eqref{eq:pay-smooth} of $\pay_{\mix}^{\play}$, we get:
\begin{align}
\norm{\nabla_{\play} \pay_{\mix}^{\play}(\point) - \nabla_{\play}\pay^{\play}(\point)}
	&= \frac{1}{\vol(\mix\ball^{\play}) \prod_{\playalt\neq\play} \vol(\mix\sphere_{\playalt})}
	\notag\\
	&\times
		\norm*{
			\int_{\mix\ball^{\play}} \int_{\prod_{\playalt\neq\play} \mix\sphere^{\playalt}}
			\nabla_{\play} \pay^{\play}(\point^{\play} + \pertvec^{\play};\point^{-\play} + \unitvec^{-\play})
			- \nabla_{\play}\pay^{\play}(\point^{\play};\point^{-\play})
			\,\dd\unitvec^{1} \dotsm d\pertvec^{\play} \dotsm d\unitvec^{\nPlayers}
			} 
	\notag\\
	&\leq \frac{1}{\vol(\mix\ball^{\play}) \prod_{\playalt\neq\play} \vol(\mix\sphere_{\playalt})}
	\notag\\
	&\times
		\int_{\mix\ball^{\play}} \int_{\prod_{\playalt\neq\play} \mix\sphere^{\playalt}}
		\norm*{
			\nabla_{\play} \pay^{\play}(\point^{\play} + \pertvec^{\play};\point^{-\play} + \unitvec^{-\play})
			- \nabla_{\play}\pay^{\play}(\point^{\play};\point^{-\play})
			} 
			\,\dd\unitvec^{1} \dotsm d\pertvec^{\play} \dotsm d\unitvec^{\nPlayers}
	\notag\\
	&\leq \sqrt{\nPlayers} \smooth \mix.
\end{align}
With all this in hand, we finally get:
\begin{align}
\abs{\sbias_{\head_{\run}}^{\play}}
	&\leq \norm{\bias_{\head_{\run}}^{\play}} \norm{\state_{\run}^{\play} - \base^{\play}}
	\notag\\
	&\leq \diam(\points^{\play}) \norm{\exof{\est\payv_{\run}^{\play} \given \filter_{\run}} - \payv(\state_{\head_{\run}}^{\play})}
	= \diam(\points^{\play}) \norm{\nabla_{\play} \pay_{\mix_{\head_{\run}}}^{\play}(\state_{\head_{\run}}^{\play}) - \nabla_{\play}\pay^{\play}(\state_{\head_{\run}})}
	\notag\\
	&\leq \diam(\points^{\play}) \sqrt{\nPlayers} \smooth \mix_{\head_{\run}},
\end{align}
and our proof is complete.
\end{proof}

Finally, for the last term in the template inequality \eqref{eq:template}, we have:

\begin{lemma}
\label{lem:svar}
Let $\svar_{\run}^{2} = \frac{1}{2} \norm{\est\payv_{\run}}^{2}$.
Then, there exists a positive constant $\vbound>0$ such that:
\begin{equation}
\label{eq:svar}
\exof{\svar_{\run}^{2} \given \filter_{\run}}
	\leq \frac{\vbound^{2}}{2\mix_{\head_{\run}}^{2}}
	\quad
	\text{for all $\run=\running$}
\end{equation}
\end{lemma}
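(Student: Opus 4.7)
My plan is to bound $\|\est\payv_{\run}\|^{2}$ deterministically (pathwise) and then pass to conditional expectation. Starting from the definition in \eqref{eq:oracle},
\[
\est\payv_{\run}
    = \one_{\pool_{\run}\neq\varnothing} \frac{\vdim}{\mix_{\head_{\run}}} \est\pay_{\head_{\run}} \unitvar_{\head_{\run}},
\]
I would compute
\[
\norm{\est\payv_{\run}}^{2}
    = \one_{\pool_{\run}\neq\varnothing} \frac{\vdim^{2}}{\mix_{\head_{\run}}^{2}} \abs{\est\pay_{\head_{\run}}}^{2} \norm{\unitvar_{\head_{\run}}}^{2}.
\]
Two of these factors are trivial: the indicator is at most $1$, and $\unitvar_{\head_{\run}}$ is drawn from the unit sphere $\sphere^{\vdim}$, hence $\norm{\unitvar_{\head_{\run}}}^{2}=1$ (in the degenerate case $\head_{\run}=\infty$ the indicator kills the whole term by our convention $\unitvar_{\infty}=0$, $\est\pay_{\infty}=0$, so the bound is vacuous there).

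It remains to bound $\abs{\est\pay_{\head_{\run}}} = \abs{\pay_{\head_{\run}}(\est\state_{\head_{\run}})}$ uniformly. Here I would invoke \cref{asm:pay}: since each $\pay_{\run}$ is $\vbound$-Lipschitz continuous on the compact convex set $\points$, it is uniformly bounded on $\points$ by some constant $M$ that depends only on $\diam(\points)$, $\vbound$, and a reference value of $\pay$ at a fixed point (or one can take $M$ to be an a priori uniform payoff bound that follows from continuity on a compact set). This yields the pathwise estimate
\[
\norm{\est\payv_{\run}}^{2}
    \leq \frac{\vdim^{2} M^{2}}{\mix_{\head_{\run}}^{2}}.
\]

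Finally, since $\head_{\run}$ (and hence $\mix_{\head_{\run}}$) is $\filter_{\run}$-measurable \textemdash\ it is determined by the history of delays and past dequeuings already realized at the start of round $\run$ \textemdash\ the conditional expectation preserves this deterministic bound. Setting $\vbound^{2} = \vdim^{2} M^{2}$ (with a harmless factor of $2$ absorbed in the $1/2$ in the definition of $\svar_{\run}^{2}$) gives the desired inequality \eqref{eq:svar}.

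There is no serious obstacle here: the proof is a direct calculation once one observes that the \ac{SPSA} estimator's magnitude is controlled by the bounded payoff rescaled by $\vdim/\mix_{\head_{\run}}$. The only point requiring minor care is the measurability of $\head_{\run}$ with respect to $\filter_{\run}$, and the handling of the empty-pool convention, both of which are immediate from the algorithm's construction.
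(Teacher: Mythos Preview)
Your proposal is correct and follows essentially the same approach as the paper: expand the \ac{SPSA} estimator \eqref{eq:oracle}, use $\norm{\unitvar_{\head_{\run}}}=1$, bound the realized payoff uniformly on the compact set $\points$, and set $\vbound^{2}=\vdim^{2}M^{2}$ (the paper writes $\vbound^{2}=\vdim^{2}\max_{\point\in\points}\abs{\pay(\point)}$). If anything, your write-up is slightly more careful in handling the empty-pool convention and the $\filter_{\run}$-measurability of $\head_{\run}$, but the argument is the same.
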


\begin{proof}
By the definition \eqref{eq:oracle} of $\est\payv_{\run}$, we have:
\begin{equation}
\exof{\svar_{\run}^{2} \given \filter_{\run}}
	= \frac{1}{2} \frac{\vdim^{2}}{\mix_{\head_{\run}}^{2}} \exof{\est\pay_{\head_{\run}}^{2} \norm{\unitvar_{\head_{\run}}}^{2} \given \filter_{\run}}
	\leq \frac{\vbound^{2}}{2\mix_{\head_{\run}}^{2}}
\end{equation}
where $\vbound^{2} = \vdim^{2} \max_{\point\in\points} \abs{\pay(\point)}$ and we used the fact that $\norm{\unitvar_{\run}} = 1$ (by construction).
\end{proof}

\section{Properties of the pooling strategy}
\label{app:pooling}

In this section we prove some key properties of the information queueing/dequeing strategy that is employed by the \ac{GOLD} algorithm.
This strategy allows the players to receive rewards and thus to update their strategies reasonably often (\cf \cref{lem:noupdate}), while ensuring that rewards do not stay in the unused pool any longer than absolutely necessary (\cref{lem:delayedfeedback}).
We make these statements precise belo:

\begin{lemma}
\label{lem:noupdate}
The maximal number of steps with an empty pool \textpar{no update} is bounded by the maximal delay encountered up to the stage in question;
formally:
\begin{equation}
\#\setdef{\runalt}{\pool_\runalt=\varnothing, \runalt=\running\run}
	\leq \max\nolimits_{1\leq \runalt \leq\run} (\delay_\runalt)
	\quad
	\text{for all $\run=\running$}
\end{equation}
\end{lemma}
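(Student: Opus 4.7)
My plan is a straightforward induction on $\run$. Let $E_\run$ denote the quantity on the left-hand side of the inequality to be proved, and set $D_\run := \max_{1 \le \runalt \le \run} \delay_\runalt$; the target is $E_\run \le D_\run$. The base case $\run = 1$ is essentially a tautology: an empty pool at stage $1$ forces $\info_1 = \varnothing$, and since the only round that could have contributed to $\info_1$ is $\runalt = 1$ itself, we must have $\delay_1 \ge 1 = E_1$.

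For the inductive step, the only non-trivial case is when stage $\run$ is itself an empty-pool stage, as otherwise $E_\run = E_{\run-1} \le D_{\run-1} \le D_\run$. In this case $\pool_{\run-1} = \varnothing$ and $\info_\run = \varnothing$. The key bookkeeping step is a conservation identity: exactly $\run-1$ rewards are generated over rounds $1,\dots,\run-1$; by the \ac{FIFO} rule, exactly one reward is dequeued per non-empty stage, so $(\run-1) - E_{\run-1}$ of them have been dequeued by the end of stage $\run-1$. Since $\pool_{\run-1} = \varnothing$, none remain in the pool, so the number of past-but-undelivered rewards is exactly
\[
|S| := \#\{\runalt' \le \run-1 : \runalt' + \delay_{\runalt'} > \run-1\} = E_{\run-1}.
\]
Combining $\runalt' + \delay_{\runalt'} > \run - 1$ with the stage-$\run$ condition $\info_\run = \varnothing$ (which rules out $\runalt' + \delay_{\runalt'} = \run$), I obtain $\delay_{\runalt'} \ge \run + 1 - \runalt'$ for every $\runalt' \in S$.

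A pigeonhole finishes the argument: since $|S| = E_{\run-1}$ and $S \subseteq \{1, \dots, \run-1\}$, the minimum element $\runalt_*$ of $S$ satisfies $\runalt_* \le \run - E_{\run-1}$, whence $\delay_{\runalt_*} \ge E_{\run-1} + 1 = E_\run$, and therefore $D_\run \ge \delay_{\runalt_*} \ge E_\run$. The only real subtlety is the conservation identity $|S| = E_{\run-1}$; everything else is a short counting step that depends crucially on the one-item-per-stage \ac{FIFO} dequeue policy (under batching, the identity would fail, which is what makes a direct analogue fragile for the reweighted strategies discussed earlier).
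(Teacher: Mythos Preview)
Your proof is correct (modulo one trivially fixable edge case noted below) and takes a genuinely different route from the paper. The paper argues by a monotonicity/coupling comparison: it first observes that for the constant delay sequence $\delay_\runalt \equiv D$, exactly the first $D$ stages are empty; it then shows that lowering any single delay in a sequence can only weakly decrease the number of empty-pool stages, and concludes by comparing the given sequence to the constant sequence with $D = \max_{\runalt\le\run} \delay_\runalt$. Your approach instead exploits a direct conservation law (rewards generated $=$ dequeued $+$ pooled $+$ in flight) together with a pigeonhole bound, which is more self-contained and arguably more rigorous as written; the paper's argument, by contrast, yields the structural insight that the bound is tight precisely for constant delay and explains \emph{how} perturbing a single delay affects the empty-pool count.

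One small gap: in the pigeonhole step you tacitly assume $S \ne \varnothing$, i.e.\ $E_{\run-1} \ge 1$. When $E_{\run-1} = 0$ the set $S$ is empty and there is no $\runalt_*$ to extract; but then $E_\run = 1$, and since $\info_\run = \varnothing$ forces in particular $\run \notin \info_\run$, hence $\delay_\run \ge 1$, you get $D_\run \ge \delay_\run \ge 1 = E_\run$ directly. Equivalently, you can absorb this case uniformly by running your pigeonhole on $S \cup \{\run\}$, which has cardinality $E_\run$ and for which the same lower bound $\delay_{\runalt'} \ge \run + 1 - \runalt'$ holds (the case $\runalt' = \run$ being exactly $\delay_\run \ge 1$).
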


\begin{proof}
We begin by analyzing the \emph{constant delay} case, \ie when $\delay_{\run}=\Delay$ for all $\runalt=\running\run$.
In that case, the first $\Delay$ steps have no update;
subsequently, at each stage, the agent receives payoff information with delay $\Delay$.

Now assume that $\delay_{\runalt}$, $\runalt=\running\run$, is a given sequence of delays.
We will show that if we modify a term of this sequence to $\alt\delay_{\runalt} < \delay_{\runalt}$, the number of steps without an update will remain the same or decrease by one.
Indeed, given that the reward of the $\runalt$-th stage is now collected at stage $\alt\delay_{\runalt} + \runalt$ instead of the (later) stage $\delay_{\runalt} + \runalt$, the size of the pool $\pool_{\alt\delay_{\runalt} + \runalt}$ will be increased by $1$.
In turn, this provides an update to the next step without an update (which could be $\alt\delay_{\runalt} +  \runalt$).
Thus, if the next step without an update was before time $\run$, the number of steps without an update decreases by one, otherwise it remains unchanged.
This means that the maximal number of steps without an update is reached when the delay is constant and is equal to said delay.
Our claim then follows immediately (see also \cref{fig:delay-change} for a graphical illustration of this argument).
\end{proof}


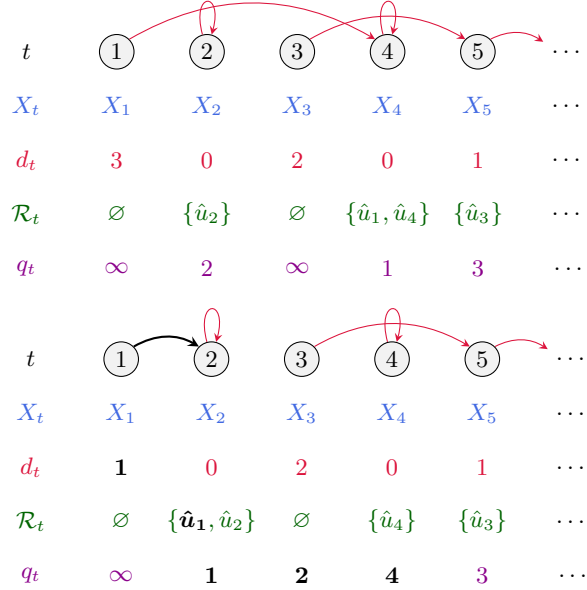
\begin{figure}[t]
\centering

\begin{tikzpicture}
[scale=1.2,
nodestyle/.style={circle,draw=black,fill=gray!10, inner sep=2pt},
edgestyle/.style={-},
>=stealth]

\small

\def\dx{1}
\def\dy{.6}

\coordinate (T0) at (0,0);
\coordinate (X0) at (0,-\dy);
\coordinate (D0) at (0,-2*\dy);
\coordinate (F0) at (0,-3*\dy);
\coordinate (Q0) at (0,-4*\dy);

\coordinate (T1) at (1,0);
\coordinate (X1) at (1,-\dy);
\coordinate (D1) at (1,-2*\dy);
\coordinate (F1) at (1,-3*\dy);
\coordinate (Q1) at (1,-4*\dy);

\coordinate (T2) at (2,0);
\coordinate (X2) at (2,-\dy);
\coordinate (D2) at (2,-2*\dy);
\coordinate (F2) at (2,-3*\dy);
\coordinate (Q2) at (2,-4*\dy);

\coordinate (T3) at (3,0);
\coordinate (X3) at (3,-\dy);
\coordinate (D3) at (3,-2*\dy);
\coordinate (F3) at (3,-3*\dy);
\coordinate (Q3) at (3,-4*\dy);

\coordinate (T4) at (4,0);
\coordinate (X4) at (4,-\dy);
\coordinate (D4) at (4,-2*\dy);
\coordinate (F4) at (4,-3*\dy);
\coordinate (Q4) at (4,-4*\dy);

\coordinate (T5) at (5,0);
\coordinate (X5) at (5,-\dy);
\coordinate (D5) at (5,-2*\dy);
\coordinate (F5) at (5,-3*\dy);
\coordinate (Q5) at (5,-4*\dy);

\coordinate (T6) at (6,0);
\coordinate (X6) at (6,-\dy);
\coordinate (D6) at (6,-2*\dy);
\coordinate (F6) at (6,-3*\dy);
\coordinate (Q6) at (6,-4*\dy);

\coordinate (T7) at (7,0);
\coordinate (X7) at (7,-\dy);
\coordinate (D7) at (7,-2*\dy);
\coordinate (F7) at (7,-3*\dy);
\coordinate (Q7) at (7,-4*\dy);


\coordinate (Tlast) at (6,0);
\coordinate (Xlast) at (6,-\dy);
\coordinate (Dlast) at (6,-2*\dy);
\coordinate (Flast) at (6,-3*\dy);
\coordinate (Qlast) at (6,-4*\dy);

\node (T0) at (T0) {$\run$};
\node (X0) at (X0) [text = RoyalBlue] {$\state_\run$};
\node (D0) at (D0) [text = Crimson] {$\delay_\run$};
\node (F0) at (F0) [text = DarkGreen] {$\rewards_\run$};
\node (Q0) at (Q0) [text = DarkMagenta] {$\head_\run$};


\node (T1) at (T1) [nodestyle] {$1$};
\node (X1) at (X1) [text = RoyalBlue] {$\state_{1}$};
\node (D1) at (D1) [text = Crimson] {$3$};
\node (F1) at (F1) [text = DarkGreen] {$\varnothing$};
\node (Q1) at (Q1) [text = DarkMagenta] {$\infty$};

\node (T2) at (T2) [nodestyle] {$2$};
\node (X2) at (X2) [text = RoyalBlue] {$\state_{2}$};
\node (D2) at (D2) [text = Crimson] {$0$};
\node (F2) at (F2) [text = DarkGreen] {$\{\est\pay_{2}\}$};
\node (Q2) at (Q2) [text = DarkMagenta] {$2$};
\draw [Crimson,edgestyle,->,loop above] (T2) to (T2);

\node (T3) at (T3) [nodestyle] {$3$};
\node (X3) at (X3) [text = RoyalBlue] {$\state_{3}$};
\node (D3) at (D3) [text = Crimson] {$2$};
\node (F3) at (F3) [text = DarkGreen] {$\varnothing$};
\node (Q3) at (Q3) [text = DarkMagenta] {$\infty$};

\node (T4) at (T4) [nodestyle] {$4$};
\node (X4) at (X4) [text = RoyalBlue] {$\state_{4}$};
\node (D4) at (D4) [text = Crimson] {$0$};
\node (F4) at (F4) [text = DarkGreen]  {$\{\est\pay_{1},\est\pay_{4}\}$};
\node (Q4) at (Q4) [text = DarkMagenta] {$1$};
\draw [Crimson,edgestyle,->,bend left] (T1.north east) to (T4.north west);
\draw [Crimson,edgestyle,->,loop above] (T4) to (T4);

\node (T5) at (T5) [nodestyle] {$5$};
\node (X5) at (X5) [text = RoyalBlue] {$\state_{5}$};
\node (D5) at (D5) [text = Crimson] {$1$};
\node (F5) at (F5) [text = DarkGreen] {$\{\est\pay_{3}\}$};
\node (Q5) at (Q5) [text = DarkMagenta] {$3$};
\draw [Crimson,edgestyle,->,bend left] (T3.north east) to (T5.north west);



\node (Tlast) at (Tlast) {$\dots$};
\node (Xlast) at (Xlast) {$\dots$};
\node (Dlast) at (Dlast) {$\dots$};
\node (Flast) at (Flast) {$\dots$};
\node (Qlast) at (Qlast){$\ldots$};
\draw [Crimson,edgestyle,->,bend left] (T5.north east) to (Tlast.north west);

%
%

\end{tikzpicture}
\medskip

\begin{tikzpicture}
[scale=1.2,
nodestyle/.style={circle,draw=black,fill=gray!10, inner sep=2pt},
edgestyle/.style={-},
>=stealth]

\small

\def\dx{1}
\def\dy{.6}

\coordinate (T0) at (0,0);
\coordinate (X0) at (0,-\dy);
\coordinate (D0) at (0,-2*\dy);
\coordinate (F0) at (0,-3*\dy);
\coordinate (Q0) at (0,-4*\dy);

\coordinate (T1) at (1,0);
\coordinate (X1) at (1,-\dy);
\coordinate (D1) at (1,-2*\dy);
\coordinate (F1) at (1,-3*\dy);
\coordinate (Q1) at (1,-4*\dy);

\coordinate (T2) at (2,0);
\coordinate (X2) at (2,-\dy);
\coordinate (D2) at (2,-2*\dy);
\coordinate (F2) at (2,-3*\dy);
\coordinate (Q2) at (2,-4*\dy);

\coordinate (T3) at (3,0);
\coordinate (X3) at (3,-\dy);
\coordinate (D3) at (3,-2*\dy);
\coordinate (F3) at (3,-3*\dy);
\coordinate (Q3) at (3,-4*\dy);

\coordinate (T4) at (4,0);
\coordinate (X4) at (4,-\dy);
\coordinate (D4) at (4,-2*\dy);
\coordinate (F4) at (4,-3*\dy);
\coordinate (Q4) at (4,-4*\dy);

\coordinate (T5) at (5,0);
\coordinate (X5) at (5,-\dy);
\coordinate (D5) at (5,-2*\dy);
\coordinate (F5) at (5,-3*\dy);
\coordinate (Q5) at (5,-4*\dy);

\coordinate (T6) at (6,0);
\coordinate (X6) at (6,-\dy);
\coordinate (D6) at (6,-2*\dy);
\coordinate (F6) at (6,-3*\dy);
\coordinate (Q6) at (6,-4*\dy);

\coordinate (T7) at (7,0);
\coordinate (X7) at (7,-\dy);
\coordinate (D7) at (7,-2*\dy);
\coordinate (F7) at (7,-3*\dy);
\coordinate (Q7) at (7,-4*\dy);


\coordinate (Tlast) at (6,0);
\coordinate (Xlast) at (6,-\dy);
\coordinate (Dlast) at (6,-2*\dy);
\coordinate (Flast) at (6,-3*\dy);
\coordinate (Qlast) at (6,-4*\dy);

\node (T0) at (T0) {$\run$};
\node (X0) at (X0) [text = RoyalBlue] {$\state_\run$};
\node (D0) at (D0) [text = Crimson] {$\delay_\run$};
\node (F0) at (F0) [text = DarkGreen] {$\rewards_\run$};
\node (Q0) at (Q0) [text = DarkMagenta] {$\head_\run$};


\node (T1) at (T1) [nodestyle] {$1$};
\node (X1) at (X1) [text = RoyalBlue] {$\state_{1}$};
\node (D1) at (D1)  {$\mathbf{1}$};
\node (F1) at (F1) [text = DarkGreen] {$\varnothing$};
\node (Q1) at (Q1) [text = DarkMagenta] {$\infty$};

\node (T2) at (T2) [nodestyle] {$2$};
\node (X2) at (X2) [text = RoyalBlue] {$\state_{2}$};
\node (D2) at (D2) [text = Crimson] {$0$};
\node (F2) at (F2) [text = DarkGreen] {$\{\color{black}\boldsymbol{\est\pay_{1}}\color{DarkGreen},\est\pay_{2}\}$};
\node (Q2) at (Q2){$\mathbf{1}$};
\draw [Crimson,edgestyle,->,loop above] (T2) to (T2);
\draw [thick,edgestyle,->,bend left] (T1.north east) to (T2.north west);

\node (T3) at (T3) [nodestyle] {$3$};
\node (X3) at (X3) [text = RoyalBlue] {$\state_{3}$};
\node (D3) at (D3) [text = Crimson] {$2$};
\node (F3) at (F3) [text = DarkGreen] {$\varnothing$};
\node (Q3) at (Q3) {$\mathbf{2}$};

\node (T4) at (T4) [nodestyle] {$4$};
\node (X4) at (X4) [text = RoyalBlue] {$\state_{4}$};
\node (D4) at (D4) [text = Crimson] {$0$};
\node (F4) at (F4) [text = DarkGreen]  {$\{\est\pay_{4}\}$};
\node (Q4) at (Q4) {$\mathbf{4}$};
\draw [Crimson,edgestyle,->,loop above] (T4) to (T4);

\node (T5) at (T5) [nodestyle] {$5$};
\node (X5) at (X5) [text = RoyalBlue] {$\state_{5}$};
\node (D5) at (D5) [text = Crimson] {$1$};
\node (F5) at (F5) [text = DarkGreen] {$\{\est\pay_{3}\}$};
\node (Q5) at (Q5) [text = DarkMagenta] {$3$};
\draw [Crimson,edgestyle,->,bend left] (T3.north east) to (T5.north west);



\node (Tlast) at (Tlast) {$\dots$};
\node (Xlast) at (Xlast) {$\dots$};
\node (Dlast) at (Dlast) {$\dots$};
\node (Flast) at (Flast) {$\dots$};
\node (Qlast) at (Qlast){$\ldots$};
\draw [Crimson,edgestyle,->,bend left] (T5.north east) to (Tlast.north west);

%
%

\end{tikzpicture}
\caption{Evolution of the number of steps without an update if one of the delays encountered is reduced.
The case presented is when the number of steps without an update decreases.
Here the delay of step $1$ decreases from $3$ to $1$, and, as a consequence, step $3$ has an update that it previously did not have (changes in the information pool and its head are higlighted in \textbf{bold}).}
\label{fig:delay-change}
\end{figure}

We now proceed to bound the difference between the time that an information is used and the time it was generated:

\begin{lemma}
\label{lem:delayedfeedback}
The time lag $\run-\head_{\run}$ between the step when a reward is observed and the step when the action was taken is bounded from above by the maximal delay up to step $\run$;
formally:
\begin{equation}
\label{ex:lag}
\run - \head_{\run}
	\leq \max\nolimits_{\start \leq \runalt \leq \run} \delay_{\runalt}
	\quad
	\text{for all $\run=\running$}
\end{equation}
\end{lemma}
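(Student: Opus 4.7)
The plan is to prove the bound by induction on $\run$, arguing by contradiction in the inductive step. Throughout, write $\Delay_\run = \max_{\runalt \leq \run}\delay_\runalt$ and recall the convention $\head_\run = \infty$ when the intermediate pool is empty (in which case the inequality is vacuous).

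For the base case $\run=1$, either $\info_1 = \varnothing$ (then $\head_1 = \infty$ and the bound is vacuous) or $\head_1 = 1$ must satisfy $\delay_1 = 0$, so that $\run - \head_1 = 0 = \Delay_1$. For the inductive step, assume $\runalt - \head_\runalt \leq \Delay_\runalt$ for every $\runalt < \run$ and suppose for contradiction that $\head_\run < \run - \Delay_\run$. Write $j \coloneqq \head_\run$ and split according to the recursive description $\head_\run = \min(\pool_{\run-1} \cup \info_\run)$.

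The case $j \in \info_\run$ is settled immediately, since then $\delay_j = \run - j > \Delay_\run$, contradicting the very definition of $\Delay_\run$. The substantive case is $j \in \pool_{\run-1}$. Recalling the update rule $\pool_{\run-1} = (\pool_{\run-2} \cup \info_{\run-1}) \setminus \{\head_{\run-1}\}$, the inclusion $j \in \pool_{\run-1}$ forces $j \in \pool_{\run-2} \cup \info_{\run-1}$, so the intermediate pool at step $\run-1$ is non-empty and $\head_{\run-1}$ is finite. Moreover $\head_{\run-1} \neq j$, and since $\head_{\run-1}$ is the minimum of a set containing $j$, one has $\head_{\run-1} < j$.

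Now apply the inductive hypothesis at step $\run-1$, together with the monotonicity $\Delay_{\run-1} \leq \Delay_\run$, to obtain $\head_{\run-1} \geq (\run-1) - \Delay_\run$. On the other hand, $\head_{\run-1} \leq j - 1 \leq \run - \Delay_\run - 2$ by the standing assumption on $j$. Combining yields the impossibility
\begin{equation*}
\run - 1 - \Delay_\run \;\leq\; \head_{\run-1} \;\leq\; \run - \Delay_\run - 2,
\end{equation*}
which completes the contradiction. The only delicate point is the bookkeeping around the empty-pool convention, but this is handled by the observation that membership of $j$ in $\pool_{\run-1}$ automatically rules out $\head_{\run-1} = \infty$; once that is in place, the proof reduces to a single line of arithmetic applied to the inductive hypothesis at step $\run-1$, so no strong induction (or more intricate counting over all earlier steps) is actually needed.
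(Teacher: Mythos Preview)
Your proof is correct and follows essentially the same inductive strategy as the paper's own proof: both split the intermediate pool $\pool_{\run-1}\cup\info_{\run}$ into the ``freshly arrived'' part $\info_{\run}$ (handled directly from the definition of the delays) and the ``carried over'' part $\pool_{\run-1}$ (handled via the inductive hypothesis at step $\run-1$ together with the fact that $\head_{\run-1}$ has already been dequeued). The only cosmetic difference is that you phrase the inductive step by contradiction and are more explicit about the empty-pool convention, whereas the paper argues directly and more tersely.
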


\begin{proof}
We proceed by induction.
Since at first the pool is empty, one of the rewards arriving in the first non-empty $\info$ is used, so $\run-\head_{\run}=\delay_{\head_{\run}}\leq \max\limits_{\runalt \leq \run} (\delay_{\runalt})$ in this case.

For the inductive step, assume that $\run - \head_{\run} \leq \max\limits_{\runalt \leq \run} (\delay_{\runalt})$ for some $\run\geq1$;
we prove below that this is also the case when $\run\leftarrow\run +1$.
In that case, $\pool_{\run+1} \leftarrow \pool_{\run} \cup \info_{\run+1}$ and $\head_{\run+1} = \min(\pool_{\run+1})$ so $ \min(\info_{\run+1})\geq \run +1 - \max\limits_{\runalt \leq \run +1} (\delay_{\runalt})$.
In addition, at the $\run$-th step, the oldest element of $\pool_{\run} $ is removed from the pool and utilized.
As a result, we have
\begin{equation}
\min \pool_{\run}
	> \run - \max\nolimits_{\start \leq \runalt \leq \run} \delay_{\runalt}
	\geq \run +1 - \max\nolimits_{\start\leq\runalt \leq \run +1} \delay_{\runalt}
\end{equation}
This gives $\head_{\run+1} \geq \run +1 - \max_{\start \leq \runalt \leq \run +1} \delay_{\runalt}$ and completes the induction (and our proof).
\end{proof}

Our next lemma goes a step further and shows that, under our blanket assumptions, the lag $\run-\head_{\run}$ is small relative to $\run$:

\begin{corollary}
\label{cor:lag}
Under \cref{asm:delay}, we have $\run - \head_{\run} = o(\run^{\alpha})$;
in particular, $\run-\head_{\run} = o(\run)$ and $\head_{\run}=\Theta(\run)$ .
\end{corollary}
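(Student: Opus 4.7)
The plan is to chain the bound from Lemma \ref{lem:delayedfeedback} with Assumption \ref{asm:delay} and then extract the consequences for $\head_{\run}$ by elementary asymptotic manipulation. Concretely, Lemma \ref{lem:delayedfeedback} already gives the pointwise estimate $\run - \head_{\run} \leq \max_{1 \leq \runalt \leq \run} \delay_{\runalt}$, so the only work left is to show that the running maximum of a sequence which is $o(\run^{\alpha})$ is itself $o(\run^{\alpha})$.

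The first step is precisely that asymptotic lemma, which I would prove directly from the definition. Fix $\eps > 0$. Since $\delay_{\run}/\run^{\alpha} \to 0$, choose $N$ so that $\delay_{\runalt} \leq (\eps/2)\,\runalt^{\alpha}$ for all $\runalt \geq N$; set $M = \max_{\runalt \leq N} \delay_{\runalt}$, which is finite. Then for $\runalt \leq \run$, either $\runalt \leq N$ and $\delay_{\runalt} \leq M$, or $\runalt > N$ and $\delay_{\runalt} \leq (\eps/2)\runalt^{\alpha} \leq (\eps/2)\run^{\alpha}$. Taking $\run$ large enough that $M \leq (\eps/2)\run^{\alpha}$ yields $\max_{1\leq \runalt \leq \run} \delay_{\runalt} \leq \eps\,\run^{\alpha}$, so the running maximum is indeed $o(\run^{\alpha})$. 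Combining with Lemma \ref{lem:delayedfeedback} gives $\run - \head_{\run} = o(\run^{\alpha})$.

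The remaining conclusions are immediate. Since $\alpha < 1$, we have $o(\run^{\alpha}) = o(\run)$, which gives $\run - \head_{\run} = o(\run)$. For the final claim, divide by $\run$: the upper bound $\head_{\run} \leq \run$ is trivial, while $\head_{\run}/\run = 1 - (\run - \head_{\run})/\run \to 1$, so eventually $\head_{\run} \geq \run/2$, and hence $\head_{\run} = \Theta(\run)$.

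There is no substantive obstacle here; the whole argument is a one-line application of Lemma \ref{lem:delayedfeedback} together with the standard observation that the running maximum preserves little-oh rates. The only mild point to be careful about is that Assumption \ref{asm:delay} controls $\delay_{\run}$ pointwise rather than uniformly, which is why the separation into the prefix of indices $\runalt \leq N$ (handled by the finite constant $M$) and the tail $\runalt > N$ (handled by the $\eps$-bound) is needed.
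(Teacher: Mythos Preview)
Your proof is correct and follows the same route as the paper: the paper's own proof is a single line stating that the claim ``follows immediately from \cref{lem:delayedfeedback} and \cref{asm:delay},'' and you have simply filled in the running-maximum argument that makes this passage rigorous. The $\eps$--$N$ split showing $\max_{\runalt\leq\run}\delay_{\runalt}=o(\run^{\alpha})$ and the derivation of $\head_{\run}=\Theta(\run)$ are precisely the details the paper leaves implicit.
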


\begin{proof}
This follows immediately from \cref{lem:delayedfeedback,asm:delay}.
\end{proof}

We proceed with the proof of \cref{lem:3sums} which establishes the required control on the sequences
\begin{equation}
\tag{\ref{eq:3seqs}}
A_{\run}
	= \step_{\run} \sum_{\runalt=\head_{\run}}^{\run-1} \frac{\step_{\runalt}}{\mix_{\runalt}},
	\quad
B_{\run}
	= \step_{\run} \mix_{\head_{\run}},
	\quad
\textrm{and}
	\quad
C_{\run}
	= \frac{\step_\run^{2}}{\mix_{\head_{\run}}^{2}},
\end{equation}
that couple the encountered delays with the step-size and sampling policies of \eqref{eq:GOLD}.
For convenience, we restate \cref{lem:3sums} below:

\threesums*

\begin{remark*}
For clarity, the conditions on $b$, $c$ and $\alpha$ are summarized in \cref{fig:b_and_c} below.
\end{remark*}

\begin{figure}[t]
\centering

\begin{tikzpicture}
[scale=0.5,
nodestyle/.style={circle,draw=black,fill=gray!10, inner sep=2pt},
edgestyle/.style={-},
>=stealth]

\small

\def\dx{1}
\def\dy{.6}

\draw[thick,->] (0,-.5) -- (0,9.5) node[anchor=north east] {b};
\draw[thick,->](-.5,0) -- (9.5,0) node[anchor=north west] {c};

\fill[DarkGreen!40!white] (8,0) --(8,4)--(6,2);

\draw[thick,dashed,RoyalBlue] (9,-1) -- (-1,9);
\draw[thick,RoyalBlue] (8,-1) -- (8,9) ;
\draw[thick,dashed,RoyalBlue] (3,.-1) -- (9,5);

\draw[thick,Crimson] (7,-2) -- (9, 2);
\coordinate (A0) at (9.75, 2.5);
\node (A0) at (A0) [text = Crimson] {$\alpha = 1$};
\draw[thick,Crimson] (5,-2) -- (9, 6);
\coordinate (A1) at (9.75, 6.5);
\node (A1) at (A1) [text = Crimson] {$\alpha =  \frac{1}{2}$};
\draw[thick,Crimson] (4,-2) -- (9, 8);
\coordinate (A2) at (9.75, 8.5);
\node (A2) at (A2) [text = Crimson] {$\alpha =  \frac{1}{4}$};
\draw[thick,Crimson] (3,-2) -- (9,10);
\coordinate (A3) at (9.75, 10.5);
\node (A3) at (A3) [text = Crimson] {$\alpha = 0$};

\draw[step=1cm,dash dot,gray,very thin] (-.2,-.2) grid (8.7,8.7);

\coordinate (A4) at (6.5, 7);
\node (A4) at (A4) [text = Crimson,rotate=63.4349488] {$b \leq 2c - 1 - \alpha$};

\draw (6,0.2) -- (6,-0.2) node[anchor=north] {$\frac{3}{4}$};
\draw ( 8,0.2) -- (8,-0.2) node[anchor=north west] {$1$};
\draw (0.2,8) -- (-0.2,8) node[anchor=east] {$1$};

\end{tikzpicture}
\caption{The allowable region (green shaded areas) of possible values of the sampling radius and step-size exponents $b$ and $c$ for various values of the groth exponent $\alpha$ of the encountered delays.
The dashed blue lines corresponding to the last two terms in \eqref{eq:params} indicate hard boundaries leading to logarithmic terms in the regret instead of constants.}
\label{fig:b_and_c}
\end{figure}

\begin{proof}
We proceed step-by-step.
\begin{enumerate}
\item
For the series $\sum_{\run=\start}^{\nRuns} A_{\run}$, \cref{cor:lag} guarantees the existence of some $M$ such that $\run-\head_{\run}\leq M \run^{\alpha} \leq M \run$.
We also have that $1/\runalt^{b} \propto \mix_{\runalt} \leq \mix_{\head_{\runalt}} \propto 1/\head_{\runalt}^{b} = \Theta(1/\runalt^{b})$ so $\mix_{\head_{\runalt}} = \Theta(1/\runalt^{b})$.
In turn, this gives
\begin{align}
A_{\run}
	= \step_{\run} \sum_{\runalt=\head_{\run}}^{\run-1} \frac{\step_{\runalt}}{\mix_{\runalt}}
	\leq \step_{\run} (\run - \head_{\run}) \frac{\step_{\run}}{\mix_{\head_{\run}}}
	\leq M \step_{\run} \run^{\alpha} \frac{\step_{\run}}{\mix_{\head_{\run}}}
	= \Theta\parens*{\frac{1}{\run^{2c-\alpha-b}}}.
\end{align}
We conclude that $\sum_{\run=\start}^{\infty} A_{\run}$ is finite if $2c-\alpha-b>1$ and $\sum_{\run=\start}^{\nRuns} A_{\run} = \bigoh(\log\nRuns)$ if $2c - \alpha - b =1$;
this establishes our claim for $A_{\run}$.

\item
For the series $\sum_{\run=\start}^{\nRuns} B_{\run}$, invoking again \cref{cor:lag} and arguing as above, we readily get $\step_{\run} \mix_{\head_{\run}} = \bigoh(1/\run^{b+c})$.
Hence, the condition $c + b > 1$ is sufficient for the convergence of the infinite series to a finite number, whereas, in the case $c + b = 1$, we have $\sum_{\run=\start}^{\nRuns} B_{t} = \bigoh(\log\nRuns)$.

\item
Finally, for the series $\sum_{\run=\start}^{\nRuns} C_{\run} = \sum_{\run=\start}^{\nRuns} \step_{\run}^{2} / \mix_{\head_{\run}}^{2}$, invoking \cref{cor:lag} one last time and using the fact that $\mix_{\head_{\run}} = \bigoh(1/\run^{b})$, we obtain $\step_{\run}^{2} / \mix_{\head_{\run}}^{2} = \Theta(\run^{2b} / \run^{2c})$.
Therefore, the condition $2c - 2b>1$ is sufficient for the convergence of the infinite series to a finite number, whereas, in the case $2c - 2b = 1$, we have $\sum_{\run=\start}^{\nRuns} C_{\run} = \bigoh(\log\nRuns)$.
\qedhere
\end{enumerate}
\end{proof}

\section{Regret analysis}
\label{app:regret}

We are now in a position to prove the main regret guarantee of the \ac{GOLD} policy;
for convenience, we restate it below:

\regret*

\begin{proof}
Fix a benchmark action $\point\in\points$.
Then, for all stages $\run = \running\nRuns$ at which an update occurs, \cref{lem:template} gives
\begin{align}
\step_{\run} \braket{\payv(\state_{\run})}{\point -\state_{\run}} 
	&\leq \frac{1}{2} \norm{\state_{\run} - \point}^{2}
		- \frac{1}{2}\norm{\point-\state_{\run+1}}^{2}
	\notag\\
	&+ \step_{\run} \sum_{\runalt=\head_{\run}}^{\run-1} \braket{\payv(\state_{\runalt}) - \payv(\state_{\runalt +1})}{\state_{\run} - \point}
	\notag\\
	&+ \step_{\run}\snoise_{\head_{\run}+1}
		+ \step_{\run}\sbias_{\head_{\run}}
		+ \step_{\run}^{2} \svar_{\run}^{2}
\end{align}
where, the quantities $\snoise_{\head_{\run}+1}$, $\sbias_{\head_{\run}}$ and $\svar_{\run}^{2}$ are defined as in \cref{lem:snoise,lem:sbias,lem:svar} respectively,
and, in the second line, we unfolded the pairing $\braket{\payv(\state_{\head_{\run}})}{\state_{\run} - \point}$ as
\begin{align}
\label{eq:unfold}
\braket{\payv(\state_{\head_{\run}})}{\state_{\run} - \point}
	&= \braket{\payv(\state_{\head_{\run}}) - \payv(\state_{\head_{\run}+1})}{\state_{\run} - \point}
	+ \dotsm
	+ \braket{\payv(\state_{\run})}{\state_{\run} - \point}
	\notag\\
	&= \braket{\payv(\state_{\run})}{\state_{\run} - \point}
	+ \sum_{\runalt=\head_{\run}}^{\run-1} \braket{\payv(\state_{\runalt}) - \payv(\state_{\runalt +1})}{\state_{\run} - \point}.
\end{align}
Therefore, conditioning on $\filter_{\run}$ and taking expectations, we get the bound:
\begin{subequations}
\label{eq:temp-bound}
\begin{align}
\step_{\run} \braket{\payv(\state_{\run})}{\point -\state_{\run}}
	&= \step_{\run} \exof{\braket{\payv(\state_{\run})}{\point -\state_{\run}} \given \filter_{\run}}
	\notag\\
	&\leq \frac{1}{2} \norm{\point-\state_{\run}}^{2}
		- \frac{1}{2} \exof{\norm{\point-\state_{\run+1}}^{2} \given \filter_{\run}}
	\\
	&\label{eq:reg-bound-unfold}
	+ \step_{\run} \exof*{
		\sum_{\runalt=\head_{\run}}^{\run-1}
		\braket{\payv(\state_{\runalt}) - \payv(\state_{\runalt +1})}{\state_{\run} - \point}
			\given \filter_{\run}}
	\\
	&\label{eq:reg-bound-lems}
	+ \step_{\run} \exof{\snoise_{\head_{\run}+1}\given \filter_{\run}}
		+ \step_{\run} \exof{\sbias_{\head_{\run}} \given \filter_{\run}}
		+ \step _{\run}^{2} \exof{\svar_{\run}^{2} \given \filter_{\run}},
\end{align}
\end{subequations}
where, in the first line, we used the fact that $\state_{\run}$ is $\filter_{\run}$-measurable.

We proceed term-by-term.
First, for \eqref{eq:reg-bound-unfold}, since $\state_{\runalt}$ is $\filter_{\run}$-measurable for $\runalt\leq \run$, we get:
\usetagform{comment}
\begin{align}
\label{eq:unfold-bound}
\exof*{\sum_{\runalt=\head_{\run}}^{\run-1}
	\braket{\payv(\state_{\runalt}) - \payv(\state_{\runalt +1})}{\state_{\run} - \point}
	\given \filter_{\run}}
	\hspace{-12em}&
	\notag\\
	&= \sum_{\runalt=\head_{\run}}^{\run-1}
	\braket{\payv(\state_{\runalt}) - \payv(\state_{\runalt +1})}{\state_{\run} - \point}
	\tag{measurability}\\
	&\leq \sum_{\runalt=\head_{\run}}^{\run-1}
		\norm{\payv(\state_{\runalt}) - \payv(\state_{\runalt +1})}
		\cdot \norm{\state_{\run} - \point}
	\tag{Cauchy-Schwarz}
	\\
	&\leq \smooth \diam(\points)
		\sum_{\runalt=\head_{\run}}^{\run-1}
			\norm{\state_{\runalt} - \state_{\runalt +1}}
	\tag{Lipschitz $+$ compactness}
	\\
	&\leq \smooth \diam(\points)
		\sum_{\runalt=\head_{\run}}^{\run-1}
			\norm{\step_{\runalt} \est\payv_{\runalt}}
	\tag{non-expansivity of $\Eucl$}
	\\
	&= \smooth \diam(\points)
		\sum_{\runalt=\head_{\run}}^{\run-1}
			\step_{\runalt} \frac{\vdim}{\mix_{\head_{\runalt}}} \est\pay_{\head_{\runalt}}
	\tag{\ac{SPSA} estimator}
	\\
	&\leq K \sum_{\runalt=\head_{\run}}^{\run-1} \frac{\step_{\runalt}}{\mix_{\runalt}}
	\tag{decreasing $\mix_{\run}$}
\end{align}%
\usetagform{default}%
\noindent
where we set $K = \vdim\smooth \diam(\points) \max_{\point\in\points} \abs{\pay(\point)}$.
Moreover, by invoking \cref{lem:snoise,lem:sbias,lem:svar}, the term \eqref{eq:reg-bound-lems} becomes
\begin{equation}
\step_{\run} \exof{\snoise_{\head_{\run}+1}\given \filter_{\run}}
		+ \step_{\run} \exof{\sbias_{\head_{\run}} \given \filter_{\run}}
		+ \step _{\run}^{2} \exof{\svar_{\run}^{2} \given \filter_{\run}}
	\leq 0
		+ \bbound \step_{\run} \mix_{\head_{\run}}
		+ \frac{\vbound^{2}}{2} \frac{\step_{\run}^{2}}{\mix_{\head_{\run}}^{2}},
\end{equation}
with $\vbound^{2} = \vdim^{2} \max_{\point\in\points} \abs{\point(\point)}$ as in the proof of \cref{lem:svar}.
Thus, putting everything together,
and recalling that $\pay_{\run}$ is assumed concave,
we get
\begin{align}
\label{eq:temp-bound2}
\pay_{\run}(\point) - \pay_{\run}(\state_{\run})
	\leq \braket{\payv(\state_{\run})}{\point - \state_{\run}}
	&\leq \frac{\norm{\state_{\run} - \point}^{2} - \exof{\norm{\state_{\run+1} - \point}^{2} \given \filter_{\run}}}{2\step_{\run}}
	\notag\\
	&+ \frac{1}{\step_{\run}}
		\bracks*{
			K \step_{\run} \sum_{\runalt=\head_{\run}}^{\run-1} \frac{\step_{\runalt}}{\mix_{\runalt}}
			+ \bbound \step_{\run}\mix_{\head_{\run}}
			+ \frac{\vbound^{2}}{2} \frac{\step_{\run}^{2}}{\mix_{\head_{\run}}^{2}}
		} 
\end{align}

Now, using \cref{lem:3sums} and the fact that $\step_{\run}$ is decreasing,
a summation of the above yields:
\begin{align}
\sum_{\run=1}^{\nRuns} \one_{\{\pool_{\run}\neq\varnothing\}} \bracks{\pay(\point) - \pay(\state_{\run})}
	&\leq \sum_{\run=\start}^{\nRuns} \frac{\norm{\state_{\run} - \point}^{2} - \exof{\norm{\state_{\run+1} - \point}^{2} \given \filter_{\run}}}{2\step_{\run}}
	\notag\\
	&\qquad
		+ \frac{1}{\step_\nRuns}
			\sum_{\run=\start}^{\nRuns}
				\one_{\{\pool_{\run}\neq\varnothing\}}
				\bracks*{K A_{\run} + \bbound B_{\run} + \frac{\vbound^{2}}{2} C_{t}}
	\notag\\
	&= \sum_{\run=\start}^{\nRuns} \frac{\norm{\state_{\run} - \point}^{2} - \exof{\norm{\state_{\run+1} - \point}^{2} \given \filter_{\run}}}{2\step_{\run}}
		+ \tilde\bigoh\parens*{\frac{1}{\step_\nRuns}}
	\notag\\
	&= \frac{\norm{\state_{\start} - \point}^{2}}{2\step_{\start}}
		- \frac{\exof{\norm{\state_{\run+1} - \point}^{2} \given \filter_{\nRuns}}}{2\step_{\nRuns}}
	\notag\\
	&\qquad
		+ \sum_{\run=2}^{\nRuns}
		\bracks*{
		\frac{\norm{\point-\state_{\run}}^{2}}{2\step_{\run}} -\frac{\exof{\norm{\point-\state_{\run}}^{2}\given \filter_{\run-1}}}{2\step_{\run-1}}
		} 
		+ \tilde\bigoh\parens*{\frac{1}{\step_\nRuns}}.
\end{align}
Then, by taking expectations and letting $\bar\breg_{\run} = (1/2) \exof{\norm{\state_{\run} - \point}^{2}}$, we get:
\begin{align}
\exof*{\sum_{\run=\start}^{\nRuns} \one_{\{\pool_{\run}\neq\varnothing\}}
	\bracks{\pay(\point) - \pay(\state_{\run}))}}
	&\leq \frac{\bar\breg_{\start}}{\step_1}
		- \frac{\bar\breg_{\nRuns+1}}{\step_{\nRuns}}
		+ \sum_{\run=2}^{\nRuns} \bracks*{
			\frac{1}{\step_{\run}} -\frac{1}{\step_{\run-1}}} \bar\breg_{\run}
		+ \tilde\bigoh\parens*{\frac{1}{\step_\nRuns}}
	\notag\\
	&\leq \frac{\diam(\points)^{2}}{2\step_{\start}}
		+ \frac{\diam(\points)^{2}}{2}
			\sum_{\run=2}^{\nRuns}
			\bracks*{\frac{1}{\step_{\run}} -\frac{1}{\step_{\run-1}}}
		+ \tilde\bigoh\parens*{\frac{1}{\step_\nRuns}}
	\notag\\
	&= \frac{\diam(\points)^{2}}{2\step_\nRuns}
		+ \tilde\bigoh\parens*{\frac{1}{\step_\nRuns}}
	= \tilde\bigoh\parens*{\frac{1}{\step_\nRuns}}.
\end{align}

Recall now that the number of steps without an update is $o(\nRuns^\alpha)$ by \cref{lem:delayedfeedback,lem:noupdate}.
We thus conclude that:
\begin{equation}
\exof*{\sum_{\run=1}^{\nRuns} \bracks{\pay(\point) -\pay(\state_{\run}))}}
	= \tilde\bigoh\parens*{\frac{1}{\step_\nRuns} + \nRuns^\alpha}
	= \tilde\bigoh\parens*{\nRuns^\alpha + \nRuns^{c}}.
\end{equation}
Using \cref{fig:b_and_c} as a visual aid, we see that the smallest admissible value of $c$ is $\frac{3}{4}$ if $\alpha\leq\frac{1}{4}$, or $\frac{2}{3}+\frac{\alpha}{3}$ otherwise \textendash\ \ie the intersection of the lines $b=2c-1-\alpha$ and $b=1-c$.
In addition, $c\geq\frac{2}{3}+\frac{\alpha}{3}>\alpha$ because $\alpha<1$.
Therefore, by choosing $c=\max(\frac{3}{4},\frac{2}{3}+\frac{\alpha}{3})$  and $b=\min(\frac{1}{4},\frac{1}{3}-\frac{\alpha}{3})$ to minimize the term $\nRuns^{\alpha} + \nRuns^{c}$, we conclude that
\begin{equation}
\overline\reg(\nRuns)
	= \exof*{\sum_{\run=1}^{\nRuns} (\pay(\point) -\pay(\state_{\run}))}
	= \tilde\bigoh(\nRuns^{c})
	= \tilde\bigoh\parens[\big]{\nRuns^{3/4} + \nRuns^{2/3 + \alpha/3}}.
\end{equation}
Consequently, the $\tilde\bigoh(\nRuns^{3/4})$ bandit bound is achieved for $c=\frac{3}{4}$, $b=\frac{1}{4}$ and $\alpha\leq\frac{1}{4}$.
\end{proof}

\section{Convergence to \acl{NE}}
\label{app:Nash}

We now proceed to prove \cref{thm:Nash} on the convergence to \acl{NE}.
For convenience, we restate our result below:

\Nash*

To streamline our presentation, we divide the proof of \cref{thm:Nash} in two parts:
First, we show that the distance between $\state_{\run}$ and the game's (necessarily unique) \acl{NE} $\eq$ admits a well-defined limit with probability $1$;
subsequently, we extract a subsequence of $\state_{\run}$ that converges to $\eq$.
Proving these two results would imply that the limit of the distance of $\state_{\run}$ to equilibrium is necessarily zero (with probability $1$).
We make this approach precise below:

\begin{proposition}
\label{prop:dist-lim}
Suppose that $\game$ satisfies \eqref{eq:DSC} and let $\eq$ be its \textpar{necessarily unique} \acl{NE}.
Then, with assumptions as in \cref{thm:Nash}, the limit $\lim_{\run\to\infty} \norm{\state_{\run} - \eq}$ exists and is finite with probability $1$.
\end{proposition}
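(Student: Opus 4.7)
The plan is to construct a Lyapunov function out of the weighted distances $\sum_{\play} \lambda^{\play} \breg_{\run}^{\play}$ (with the weights $\lambda^\play$ from the \eqref{eq:DSC} condition) and to show that, along the sequence generated by \eqref{eq:GOLD}, this process behaves like a quasi-martingale in the sense of Robbins--Siegmund. Summing the template inequality of \cref{lem:template} over players with the chosen weights and with benchmark $\base = \eq$ gives, for each $\run$ where an update occurs,
\begin{align*}
V_{\run+1} \leq V_{\run}
 &+ \step_{\run} \sum_{\play} \lambda^{\play} \braket{\payv^{\play}(\state_{\head_{\run}})}{\state_{\run}^{\play} - \eq^{\play}} \\
 &+ \step_{\run} \sum_{\play} \lambda^{\play} \snoise_{\head_{\run}+1}^{\play}
   + \step_{\run} \sum_{\play} \lambda^{\play} \sbias_{\head_{\run}}^{\play}
   + \tfrac{1}{2} \step_{\run}^{2} \sum_{\play} \lambda^{\play} \norm{\est\payv_{\run}^{\play}}^{2},
\end{align*}
where $V_{\run} := \sum_{\play} \lambda^{\play} \breg_{\run}^{\play}$.

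The crux of the argument is to exploit \eqref{eq:DSC} to make the drift term non-positive up to a controllable lag. To that end, I would split the drift as
\[
\braket{\payv^{\play}(\state_{\head_{\run}})}{\state_{\run}^{\play} - \eq^{\play}}
 = \braket{\payv^{\play}(\state_{\head_{\run}})}{\state_{\head_{\run}}^{\play} - \eq^{\play}}
 + \braket{\payv^{\play}(\state_{\head_{\run}})}{\state_{\run}^{\play} - \state_{\head_{\run}}^{\play}}.
\]
Summed with the weights $\lambda^{\play}$, the first piece is bounded above by $\sum_{\play} \lambda^{\play} \braket{\payv^{\play}(\eq)}{\state_{\head_{\run}}^{\play} - \eq^{\play}}$ by \eqref{eq:DSC}, and the latter is itself non-positive by the first-order optimality condition at the \acl{NE} $\eq$. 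For the second piece I would unfold $\state_{\run}^{\play} - \state_{\head_{\run}}^{\play}$ telescopically as in \eqref{eq:unfold-bound}, using non-expansivity of $\Eucl$ and the variance bound of \cref{lem:svar} to obtain an error contribution of order $A_{\run} = \step_{\run} \sum_{\runalt = \head_{\run}}^{\run - 1} \step_{\runalt}/\mix_{\runalt}$.

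The remaining three terms are handled by the estimates of \cref{app:aux}: the bias contribution is $O(\step_{\run} \mix_{\head_{\run}}) = O(B_{\run})$ by \cref{lem:sbias}; the variance contribution is $O(\step_{\run}^{2}/\mix_{\head_{\run}}^{2}) = O(C_{\run})$ by \cref{lem:svar}; and the noise term $\step_{\run} \sum_{\play} \lambda^{\play} \snoise_{\head_{\run}+1}^{\play}$ is a martingale difference with respect to $\filter_{\run}$ by \cref{lem:snoise}, with conditional second moment also controlled by $C_{\run}$. By \cref{lem:3sums}, the parameter conditions \eqref{eq:params} guarantee that $\sum_{\run} A_{\run}$, $\sum_{\run} B_{\run}$ and $\sum_{\run} C_{\run}$ are all finite. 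This places the recursion in the framework of the Robbins--Siegmund almost supermartingale theorem, whose martingale-difference part is absorbed via Doob's $L^{2}$ convergence theorem applied to $\sum_{\run} \step_{\run} \sum_{\play} \lambda^{\play} \snoise_{\head_{\run}+1}^{\play}$ (summability of its quadratic variation follows from the bound on $\exof{\snoise_{\head_{\run}+1}^{2} \given \filter_{\run}}$ and the summability of $C_{\run}$).

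Applying Robbins--Siegmund then yields that $V_{\run}$ converges almost surely to a finite, non-negative random variable. Since each summand $\lambda^{\play} \breg_{\run}^{\play}$ is non-negative and $\lambda^{\play} > 0$, this forces $\norm{\state_{\run}^{\play} - \eq^{\play}}^{2}$ to converge almost surely for every player, hence $\norm{\state_{\run} - \eq}$ admits an almost-sure limit, as claimed. I expect the main technical obstacle to be bookkeeping in step~two: correctly attributing the lag contribution $\braket{\payv^{\play}(\state_{\head_{\run}})}{\state_{\run}^{\play} - \state_{\head_{\run}}^{\play}}$ as an $A_{\run}$-type error (which summable by \cref{lem:3sums}) rather than letting it contaminate the monotone drift, and verifying that the remaining negative part $-\sum_{\play} \lambda^{\play} \braket{\payv^{\play}(\state_{\head_{\run}}) - \payv^{\play}(\eq)}{\state_{\head_{\run}}^{\play} - \eq^{\play}}$ can be safely dropped without losing the information needed later for the subsequence argument.
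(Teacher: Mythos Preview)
Your strategy is essentially the paper's: combine the template inequality with the \eqref{eq:DSC} drift bound, absorb the delay-induced lag, bias, and variance into the summable sequences $A_{\run}$, $B_{\run}$, $C_{\run}$ of \cref{lem:3sums}, and conclude via a supermartingale/Robbins--Siegmund argument. Two cosmetic differences are worth recording. First, you carry the DSC weights $\lambda^{\play}$ explicitly and work with $V_{\run} = \sum_{\play} \lambda^{\play}\breg_{\run}^{\play}$, whereas the paper works directly with the unweighted $\norm{\state_{\run}-\eq}^{2}$ (tacitly taking $\lambda^{\play}\equiv 1$). Second, your lag decomposition anchors the monotone term at time $\head_{\run}$ and controls $\braket{\payv^{\play}(\state_{\head_{\run}})}{\state_{\run}^{\play}-\state_{\head_{\run}}^{\play}}$ by telescoping the \emph{iterates}; the paper instead anchors it at time $\run$ and controls $\sum_{\runalt=\head_{\run}}^{\run-1}\braket{\payv(\state_{\runalt})-\payv(\state_{\runalt+1})}{\state_{\run}-\eq}$ by telescoping the \emph{gradients} via Lipschitz smoothness. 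Both routes produce an $A_{\run}$-type residual, so this is a matter of taste.

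There is, however, a genuine slip in your final inference: from the almost-sure convergence of $V_{\run}=\sum_{\play}\lambda^{\play}\breg_{\run}^{\play}$ you \emph{cannot} deduce that each summand $\breg_{\run}^{\play}$ converges individually (two nonnegative sequences can oscillate while their weighted sum stays constant). What your argument actually yields is a.s.\ convergence of the $\lambda$-weighted distance $\sum_{\play}\lambda^{\play}\norm{\state_{\run}^{\play}-\eq^{\play}}^{2}$, not of the Euclidean $\norm{\state_{\run}-\eq}^{2}$ that the proposition literally asserts. For the overall proof of \cref{thm:Nash} this is harmless: once \cref{prop:subsequence} supplies a subsequence $\state_{\run_{k}}\to\eq$, the a.s.\ limit of $V_{\run}$ is forced to be $0$, and by equivalence of norms on the product space $\norm{\state_{\run}-\eq}\to 0$ follows. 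But to match the proposition exactly you should either restate it for the weighted norm, or assume (as the paper implicitly does) that \eqref{eq:DSC} holds with $\lambda^{\play}\equiv 1$.
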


\begin{proof}
By \eqref{eq:DSC}, we have
\begin{equation}
\braket{\payv(\point)}{ \point - \sol}
	\leq 0
	\quad
	\text{for all $\point\in\points$},
\end{equation}
with equality holding if and only if $\point = \sol$.
Then, proceeding as in the proof of the no-regret bound of \cref{thm:regret} and rearranging the \acs{RHS} of \eqref{eq:temp-bound2}, we obtain
\begin{align}
\label{eq:qFejer}
\frac{1}{2}\exof{\norm{\state_{\run+1} - \eq}^{2}\given \filter_{\run}}
	&\leq \frac{1}{2} \norm{\state_{\run} - \eq}^{2}
	+\ \bracks*{ K A_{\run} + \bbound B_{\run} + \frac{\vbound^{2}}{2} C_{\run}}
		\cdot \one_{\pool_{\run}\neq\varnothing}
\end{align}
with $K$ and $\bbound$ positive constants, and $A_{\run}$, $B_{\run}$ and $C_{\run}$ defined in \cref{eq:3seqs}.

To proceed, note that
\begin{equation}
\sum_{\runalt=\run}^{\infty}
	\bracks*{ K A_{\runalt} + \bbound B_{\runalt} + \frac{\vbound^{2}}{2} C_{\runalt}}
		\cdot \one_{\{\pool_{\runalt}\neq\varnothing\}}
	\leq \sum_{\runalt=\run}^{\infty}
		\bracks*{ K A_{\runalt} + \bbound B_{\runalt} + \frac{\vbound^{2}}{2} C_{\runalt}}
	< \infty
\end{equation}
by \cref{lem:3sums}.
Thus, if we let
\begin{equation}
S_{\run}
	= \frac{1}{2} \norm{\state_{\run} - \eq}^{2}
		+ \sum_{\runalt=\run}^{\infty}
			\bracks*{ K A_{\runalt} + \bbound B_{\runalt} + \frac{\vbound^{2}}{2} C_{\runalt}}
			\one_{\{\pool_{\runalt}\neq\varnothing\}},
\end{equation}
the bound \eqref{eq:qFejer} yields
\begin{align}
\exof{S_{\run+1} \given \filter_{\run}}
	&= \frac{1}{2} \exof{\norm{\state_{\run+1} - \eq}^{2} \given \filter_{\run}}
		+\sum_{\runalt=\run+1}^{\infty}
			\one_{\{\pool_{\runalt}\neq\varnothing\}}
			\bracks*{ K A_{\runalt} + \bbound B_{\runalt} + \frac{\vbound^{2}}{2} C_{\runalt}}
	\notag\\
	&\leq \frac{1}{2} \norm{\state_{\run} - \eq}^{2}
	+ \bracks*{ K A_{\run} + \bbound B_{\run} + \frac{\vbound^{2}}{2} C_{\run}}
		\cdot \one_{\pool_{\run}\neq\varnothing}
	\notag\\
	&\hphantom{\leq \frac{1}{2} \norm{\state_{\run} - \eq}^{2} \;}
		+\sum_{\runalt=\run+1}^{\infty}
		\one_{\{\pool_{\runalt}\neq\varnothing\}}
		\bracks*{ K A_{\runalt} + \bbound B_{\runalt} + \frac{\vbound^{2}}{2} C_{\runalt}}
	\notag\\
	&= \frac{1}{2} \exof{\norm{\state_{\run} - \eq}^{2} \given \filter_{\run}}
		+\sum_{\runalt=\run}^{\infty}
			\one_{\{\pool_{\runalt}\neq\varnothing\}}
			\bracks*{ K A_{\runalt} + \bbound B_{\runalt} + \frac{\vbound^{2}}{2} C_{\runalt}}
	= S_{\run}
\end{align}
\ie $S_{\run}$ is a supermartingale (relative to $\filter_{\run}$).
Moreover, by taking expecations, we also get
\begin{align}
\exof{S_{\run}}
	= \exof{\exof{S_{\run} \given \filter_{\run-1}}}
	&\leq \exof{S_{\run-1}}
	\leq \dotsc
	\leq \exof{S_{\start}}
	\notag\\
	&= \frac{1}{2} \norm{\state_{\start} - \eq}^{2}
		+ \exof*{
			\sum_{\runalt=\start}^{\infty}
			\bracks*{ K A_{\runalt} + \bbound B_{\runalt} + \frac{\vbound^{2}}{2} C_{\runalt}}
			\one_{\{\pool_{\runalt}\neq\varnothing\}}}
	< \infty
\end{align}
again by \cref{lem:3sums}.
This shows that $S_{\run}$ is (uniformly) bounded in $L^{1}$ so, by Doob's submartingale convergence theorem \citep[Theorem 2.5]{HH80}, we conclude that $S_{\run}$ converges \as to some finite random variable $S_{\infty}$.
In turn, this implies that $\lim_{\run\to\infty} \norm{\state_{\run} - \eq}$ exists and is finite \as, as claimed.
\end{proof}

Our second result (which is of independent interest) concerns the extraction of a subsequence of $\state_{\run}$ converging to $\eq$.

\begin{proposition}
\label{prop:subsequence}
With assumptions as in \cref{thm:Nash}, there exists with probability $1$ a \textpar{possibly random} subsequence $\state_{\run_{k}}$ of $\state_{\run}$ such that $\lim_{k\to\infty} \state_{\run_{k}} = \eq$.
\end{proposition}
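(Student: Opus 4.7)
The plan is to turn the template inequality of \cref{lem:template} into a summable gap series, and then extract a convergent subsequence from it. Let $\lambda^{\play}>0$ be the multipliers appearing in \eqref{eq:DSC} and define the Minty-type gap
\begin{equation*}
\phi(\point) := -\sum_{\play\in\players} \lambda^{\play} \braket{\payv^{\play}(\point)}{\point^{\play} - \eq^{\play}}.
\end{equation*}
Since $\eq$ is a Nash equilibrium, the variational inequality characterization of \eqref{eq:NE} gives $\sum_{\play} \lambda^{\play}\braket{\payv^{\play}(\eq)}{\state^{\play}-\eq^{\play}}\leq0$ for every $\state\in\points$; combining this with \eqref{eq:DSC} applied at $\point=\eq$, $\pointalt=\state$ shows that $\phi$ is continuous on $\points$, nonnegative, and vanishes only at $\eq$.

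First I would apply \cref{lem:template} for each player $\play$ with $\base^{\play}=\eq^{\play}$, multiply through by $\lambda^{\play}$ and sum over $\play$; the drift term $\sum_{\play}\lambda^{\play}\braket{\payv^{\play}(\state_{\head_{\run}})}{\state_{\run}^{\play}-\eq^{\play}}$ is then unfolded exactly as in \eqref{eq:reg-bound-unfold}, replacing $\state_{\head_{\run}}$ by $\state_{\run}$ modulo a telescopic lag correction of size $\bigoh(A_{\run})$. Writing $W_{\run}:=\sum_{\play}\lambda^{\play}\breg_{\run}^{\play}$, this produces, on every stage with $\pool_{\run}\neq\varnothing$, an estimate of the form
\begin{equation*}
W_{\run+1} \leq W_{\run} - \step_{\run}\phi(\state_{\run}) + \step_{\run}\snoise_{\run+1} + E_{\run},
\end{equation*}
where $\snoise_{\run+1}$ is a zero-mean $\filter_{\run}$-martingale difference (\cref{lem:snoise}) and $\abs{E_{\run}} = \bigoh(A_{\run}+B_{\run}+C_{\run})$ by \eqref{eq:unfold-bound} and \cref{lem:sbias,lem:svar}.

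Next, I would telescope this bound and pass to the limit. By \cref{lem:3sums} under the strict conditions \eqref{eq:params}, the deterministic remainder $\sum_{\run}\abs{E_{\run}}$ is finite. For the martingale part $M_{\nRuns} := \sum_{\run\leq\nRuns}\step_{\run}\snoise_{\run+1}$, the conditional variance obeys $\exof{\snoise_{\run+1}^{2}\given\filter_{\run}} \leq \diam(\points)^{2}\exof{\norm{\noise_{\head_{\run}+1}}^{2}\given\filter_{\run}} = \bigoh(1/\mix_{\head_{\run}}^{2})$ by \cref{lem:svar}, so $\sum_{\run}\step_{\run}^{2}\exof{\snoise_{\run+1}^{2}\given\filter_{\run}} = \bigoh(\sum_{\run}C_{\run}) < \infty$ a.s.; Doob's $L^{2}$ martingale convergence theorem then makes $M_{\nRuns}$ converge a.s. Combined with $W_{\nRuns+1}\geq 0$, this forces
\begin{equation*}
\sum_{\run=1}^{\infty} \step_{\run}\,\phi(\state_{\run})\,\one_{\{\pool_{\run}\neq\varnothing\}} < \infty \quad\text{a.s.}
\end{equation*}

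Finally, the admissible parameters satisfy $c\leq 1$, so $\sum_{\run}\step_{\run}=\infty$; moreover, by \cref{lem:noupdate} the number of empty-pool stages up to $\nRuns$ is $o(\nRuns^{\alpha})$, hence divergence persists after inserting the indicator. Pairing a divergent nonnegative weight sequence with a summable weighted series forces $\liminf_{\run}\phi(\state_{\run})=0$ a.s., from which we pick a (random) subsequence $\state_{\run_{k}}$ with $\phi(\state_{\run_{k}})\to 0$; compactness of $\points$ yields a further sub-subsequence $\state_{\run_{k_{j}}}\to\state_{\infty}\in\points$, and continuity of $\phi$ combined with strict monotonicity gives $\phi(\state_{\infty})=0$, i.e., $\state_{\infty}=\eq$. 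The principal obstacle is the noise term: the SPSA variance blows up as $\mix_{\head_{\run}}^{-2}$, so summability of the martingale's quadratic variation rests squarely on the coupling condition $2c-2b>1$ embedded in \eqref{eq:params}; the remaining bias and lag contributions are then absorbed by the other two conditions of \cref{lem:3sums}.
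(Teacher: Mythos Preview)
Your proof is correct and arrives at the same conclusion, but the route differs from the paper's in a meaningful way.

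The paper argues by contradiction: assuming $\liminf_{\run}\norm{\state_{\run}-\eq}>0$ on a set of positive probability, it traps the trajectory in a compact set $\cvx$ bounded away from $\eq$, extracts a uniform drift $\braket{\payv(\point)}{\point-\eq}\leq -c<0$ on $\cvx$, telescopes \eqref{eq:start}, and then invokes the \emph{law of large numbers for martingale difference sequences} (Hall--Heyde, Theorem~2.18) to show the noise part is $o(\nu_{\run})$ where $\nu_{\run}=\sum_{\runalt\leq\run}\step_{\runalt}$; since the bias, lag and variance terms are bounded by \cref{lem:3sums}, this forces $\norm{\state_{\run+1}-\eq}^{2}\to-\infty$, a contradiction.

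Your argument is instead a direct Robbins--Siegmund style extraction: you package the same error estimates into a one-step recursion $W_{\run+1}\leq W_{\run}-\step_{\run}\phi(\state_{\run})+\step_{\run}\snoise_{\run+1}+E_{\run}$, control the martingale via Doob's $L^{2}$ convergence (which is a cleaner tool here, since $\sum_{\run}\step_{\run}^{2}/\mix_{\head_{\run}}^{2}<\infty$ gives $L^{2}$-boundedness outright), and read off $\sum_{\run}\step_{\run}\phi(\state_{\run})\one_{\{\pool_{\run}\neq\varnothing\}}<\infty$ from nonnegativity of $W$. The subsequence then drops out of the divergence of $\sum_{\run}\step_{\run}\one_{\{\pool_{\run}\neq\varnothing\}}$. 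This is arguably more transparent and yields the stronger intermediate conclusion that the weighted gap series is summable (useful if one later wants rates). The paper's contradiction route, by contrast, only needs strict negativity of the drift on compacts away from $\eq$ rather than global nonnegativity of $\phi$, so it would transfer more readily to settings where the Minty gap is not globally signed. Both approaches lean on exactly the same ingredients (\cref{lem:template}, the unfolding \eqref{eq:unfold}, \cref{lem:sbias,lem:svar,lem:3sums,lem:noupdate}), so the difference is purely in how the martingale is handled.

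One small caveat worth flagging: your claim ``the admissible parameters satisfy $c\leq 1$'' is implicit in \cref{fig:abc} but is not literally among the inequalities \eqref{eq:params}; the paper's proof silently uses it as well when asserting $\nu_{\run}\to\infty$, so this is a shared tacit assumption rather than a gap specific to your argument.
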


\begin{proof}
Suppose ad absurdum that the event
\begin{equation}
\samples_{0}
	= \{\liminf\nolimits_{\run\to\infty} \norm{\state_{\run} - \eq} > 0\}
\end{equation}
occurs with positive probability (\ie with positive probability, $\state_{\run}$ does not admit $\eq$ as a limit point).
Conditioning on this event, there exists a (nonempty) compact set $\cvx \subset \points$ such that $\eq\notin\cvx$ and $\state_{\run} \in \cvx$ for all sufficiently large $\run$.
Therefore, by the continuity of $\payv$ and the fact that the game satisfies \eqref{eq:DSC}, there exists some $c>0$ such that
\begin{equation}
\label{eq:cbound}
\braket{\payv(\point)}{\point - \sol}
	\leq -c
	< 0
	\quad
	\text{for all $\point\in\cvx$}.
\end{equation}

To proceed, if we telescope \eqref{eq:start} for $\base = \eq$ and we use the decomposition \eqref{eq:signal} of $\est\payv_{\run}$, we get
\begin{align}
\frac{1}{2}\norm{\state_{\run+1} - \eq}^{2}
	&\leq \frac{1}{2}\norm{\state_{1} - \eq}^{2}
	\notag\\
	&+ \sum_{\runalt=\start}^{\run}
		\step_{\runalt}
		\one_{\{\pool_{\runalt}\neq\varnothing\}}
		\braket{\payv(\state_{\runalt})}{\state_{\runalt} - \eq}
	\notag\\
	&+ \sum_{\runalt=\start}^{\run} 
		\step_{\runalt}
		\one_{\{\pool_{\runalt}\neq\varnothing\}}
		\sum_{k=\head_{\runalt}}^{\runalt-1} \braket{\payv(\state_{k}) - \payv(\state_{k+1})}{\state_{k} - \eq}
	\notag\\
	&+ \sum_{\runalt=\start}^{\run}
		\step_{\runalt}
		\one_{\{\pool_{\runalt}\neq\varnothing\}}
		\snoise_{\head_{\runalt}+1}
	+ \sum_{\runalt=1}^{\run}
		\step_{\runalt}
		\one_{\{\pool_{\runalt}\neq\varnothing\}}
		\sbias_{\head_{\runalt}}
	+ \sum_{\runalt=\start}^{\run} \one_{\{\pool_{\runalt}\neq\varnothing\}} \step_{\runalt}^{2} \svar_{\runalt}^{2}
\end{align}
with $\snoise_{\head_{\run}+1}$, $\sbias_{\head_{\run}}$ and $\svar_{\run}$ defined respectively as in \cref{lem:snoise,lem:sbias,lem:svar}.
Subsequently, letting $\nu_{\run} =\sum_{\runalt=\start}^{\run} \step_{\runalt}$ and
\begin{equation}
Q_{\run}
	= \sum_{\runalt=\start}^{\run} 
		\step_{\runalt}
		\one_{\{\pool_{\runalt}\neq\varnothing\}}
		\sum_{k=\head_{\runalt}}^{\runalt-1} \braket{\payv(\state_{k}) - \payv(\state_{k+1})}{\state_{k} - \eq},
\end{equation}
the bound \eqref{eq:cbound} yields
\begin{align}
\label{eq:subseq-bound}
\frac{1}{2}\norm{\state_{\run+1} - \eq}^{2}
	&\leq \frac{1}{2} \norm{\state_{\start} - \eq}^{2}
		- c\nu_{\run}
	\notag\\
	&+ \bracks*{
		Q_{\run}
		+ \sum_{\runalt=\start}^{\run} \step_{\runalt} \one_{\{\pool_{\runalt}\neq\varnothing\}}\snoise_{\head_{\runalt}+1}
		+ \sum_{\runalt=1}^{\run} \step_{\runalt} \one_{\{\pool_{\runalt}\neq\varnothing\}} \sbias_{\head_{\runalt}}
		+ \sum_{\runalt=\start}^{\run} \one_{\{\pool_{\runalt}\neq\varnothing\}} \step_{\runalt}^{2} \svar_{\runalt}^{2}
		} 
\end{align}
We now proceed term-by-term:
\begin{enumerate}
\item
For the term $Q_{\run}$, working as in the case of \eqref{eq:reg-bound-unfold} in the proof of \cref{thm:regret}, we have:
\begin{align}
Q_{\run}
	&= \sum_{\runalt=\start}^{\run}
		\step_{\runalt}
		\sum_{k=\head_{\runalt}}^{\runalt-1} \braket{\payv(\state_{k}) - \payv(\state_{k+1})}{\state_{k} - \eq}
	\notag\\
	&\leq \sum_{\runalt=\start}^{\run} 
		\step_{\runalt}
		\cdot K
			\sum_{k=\head_{\runalt}}^{\runalt-1} \frac{\step_{\runalt}}{\mix_{\runalt}}
	\leq K \sum_{\runalt=\start}^{\infty} A_{\runalt}
	< \infty,
\end{align}
where $A_{\run}$ is defined as in \eqref{eq:3seqs} and, in the last step, we used \cref{lem:3sums}.

\item
For the term involving $\snoise$, let
\begin{equation}
M_{\run}
	= \sum_{\runalt=\start}^{\run} \step_{\runalt} \one_{\{\pool_{\runalt}\neq\varnothing\}}\snoise_{\head_{\runalt}+1}.
\end{equation}
Since $\pool_{\run}$ is $\filter_{\run}$-measurable, we have
\begin{equation}
\exof{M_{\run} \given \filter_{\run}}
	= M_{\run-1} + \step_{\run} \one_{\{\pool_{\run}\neq\varnothing\}} \exof{\snoise_{\head_{\run+1}} \given \filter_{\run}}
	= M_{\run-1},
\end{equation}
\ie $M_{\run}$ is a martingale.
Moreover, by the construction of the \ac{SPSA} estimator \eqref{eq:oracle} and the decomposition \eqref{eq:signal} thereof, there exists some constant $\sigma >0$ such that
\begin{equation}
\norm{\noise_{\run+1}}^{2}\leq\frac{\sigma^{2}}{\mix_{\run}^{2}}.
\end{equation}
\normalsize
We thus get:
\begin{align}
\exof{M_{\run}^{2}}
	&\leq \sum_{\runalt=\start}^{\infty}
		\one_{\pool_{\runalt}\neq\varnothing}
		\step_{\runalt}^{2}
		\exof{\snoise^{2}_{\head_{\runalt}+1} \given \filter_{\runalt}}
	\notag\\
	&\leq \sum_{\runalt=1}^{\infty}
		\one_{\pool_{\runalt}\neq\varnothing}
		\step_{\runalt}^{2}
		\norm{\state_{\runalt}-\sol}^{2}
		\exof{\norm{\noise_{\head_{\runalt+1}}}^{2} \given \filter_{\runalt}}
	\notag\\
	&\leq \sum_{\runalt=1}^{\infty}
		\step_{\runalt}^{2}
		\norm{\state_{\runalt}-\sol}^{2}
		\exof{\norm{\noise_{\head_{\runalt+1}}}^{2} \given \filter_{\runalt}}
	\notag\\
	&\leq \diam(\points)^{2} \sigma^{2}
		\sum_{\runalt=1}^{\infty} \frac{\step_{\runalt}^{2}}{\mix_{\runalt}^{2}}
	<\infty
\end{align}
Therefore, by the law of large numbers for martingale difference sequences \citep[Theorem 2.18]{HH80}, we conclude that $M_{\run} / \nu_{\run}$ converges to $0$ with probability $1$ \textendash\ and hence, with probability $1$ conditioned on $\samples_{0}$.

\item
For the second-to-last term in the brackets of \eqref{eq:subseq-bound}, \cref{lem:sbias} readily yields:
\begin{equation}
\sum_{\runalt=1}^{\run} \step_{\runalt} \one_{\{\pool_{\runalt}\neq\varnothing\}} \sbias_{\head_{\runalt}}
	\leq \sum_{\runalt=\start}^{\infty} \step_{\runalt} \sbias_{\head_{\runalt}}
	\leq \bbound \sum_{\runalt=\start}^{\infty} B_{s}
	< \infty,
\end{equation}
where $B_{\run}$ is defined as in \eqref{eq:3seqs} and, in the last step, we used \cref{lem:3sums}.

\item
Finaly, for the last term, we have:
\begin{equation}
\sum_{\runalt=\start}^{\run} \one_{\{\pool_{\runalt}\neq\varnothing\}} \step_{\runalt}^{2} \svar_{\runalt}^{2}
	\leq \sum_{\runalt=\start}^{\infty} \step_{\runalt}^{2} \svar_{\runalt}^{2}
	\leq \frac{\vbound^{2}}{2} \sum_{\runalt=\start}^{\infty} \frac{\step_{\runalt}^{2}}{\mix_{\head_{\runalt}}^{2}}
\end{equation}
where the last step follows from \cref{lem:svar}.
Subsequently, by \cref{lem:3sums}, we have $\sum_{\runalt=\start}^{\infty} \step_{\runalt}^{2} / \mix_{\head_{\runalt}}^{2} < \infty$, so this term is also finite.
\end{enumerate}

From the above, we conclude that, with probability $1$ on $\samples_{0}$ (and hence, with positive probability overall), all the non-constant terms in the brackets of \eqref{eq:subseq-bound} converge to $0$.
In turn, this implies that
\begin{equation}
\norm{\state_{\run+1} - \eq}^{2}
	\leq \norm{\state_{1} - \eq}^{2}
	- \abs{\Theta(\nu_{\run})}
	\to -\infty
	\quad
	\text{as $\run\to\infty$},
\end{equation}
a contradiction.
Going back to our original assumption, this shows that, with probability $1$, $\state_{\run}$ admits $\eq$ as a limit point, and our proof is complete.
\end{proof}


Putting all this together, the proof of \cref{thm:Nash} is relatively straightforward:

\begin{proof}[Proof of \cref{thm:Nash}]
Under the stated assumptions, $\norm{\state_{\run} - \eq}$ converges to some finite value with probability $1$.
Given that $\state_{\run}$ admits a subsequence converging to $\eq$ \as, we conclude that $\lim_{\run\to\infty} \norm{\state_{\run} - \eq} = 0$, again with probability $1$.
Finally, since $\mix_{\run}$ is decreasing to $0$, we have
\begin{equation}
\norm{\est\state_{\run} - \eq}^{2}
	= \norm{\state_{\run} + \mix_{\run}\pertvar_{\run} - \eq}^{2}
	\leq 2 \norm{\state_{\run} - \eq}^{2} + 2 \mix_{\run}^{2} \norm{\pertvar}^{2}
	\to 0
	\quad
	\as,
\end{equation}
\ie the sequence of generated actions $\est\state_{\run}$ converges to the game's \acl{NE} with probability $1$, as claimed.
\end{proof}

\bibliographystyle{plainnat}
\bibliography{bibtex/IEEEabrv,bibtex/Bibliography-GOLD}

\end{document}